\documentclass[submission,copyright,creativecommons]{eptcs}
\providecommand{\event}{AiML 2026} 

\usepackage{aiml26}
\usepackage{iftex}

\usepackage{graphicx}
\usepackage{amssymb}
\usepackage{xcolor}

\ifpdf
\usepackage{underscore}         
\usepackage[T1]{fontenc}        
\else
\usepackage{breakurl}           
\fi

\definecolor{darkgreen}{RGB}{20,120,20}
\definecolor{newgreen}{RGB}{40,160,70}

\title{Labelled Sequents for Inquisitive First-Order Modal Logic}
\author{Ivano Ciardelli
	\institute{Dipartimento FISPPA\\ University of Padua, Italy}
	\email{ivano.ciardelli@unipd.it}
	\and
	Simone Conti
	\institute{Dipartimento FISPPA\\ University of Padua, Italy}
	\email{simone.conti@phd.unipd.it}
}

\newcommand{\titlerunning}{Labelled Sequents for Inquisitive First-Order Modal Logic}
\newcommand{\authorrunning}{I. Ciardelli, S. Conti}

\hypersetup{
	bookmarksnumbered,
	pdftitle    = {\titlerunning},
	pdfauthor   = {\authorrunning},
	pdfsubject  = {EPTCS},               
	pdfkeywords = {inquisitive modal logic, labelled sequent calculus, modal predicate logic, global supervenience} 
}

\newcommand{\inqbq}{\textsf{InqBQ}}
\newcommand{\inqwq}{\textsf{InqWQ}}
\newcommand{\inqwqml}{\ensuremath{\textsf{InqQML}^{-}_{\Box}}}

\newcommand{\lori}{\mathbin{\rotatebox[origin=c]{-90}{$\geqslant$}}}

\newcommand{\existsi}{\mathord{\exists\hspace{-.4em}\exists}} 
\newcommand{\quotes}[1]{``#1''}
\renewcommand{\phi}{\varphi}

\newcommand{\s}{\textsf{s}}

\usepackage{amsmath}
\usepackage{proof}
\usepackage{bussproofs}
\usepackage{mathtools}
\usepackage{enumitem}
\usepackage{wasysym}
\usepackage{float}
\newcommand{\N}{\mathbb{N}}
\renewcommand{\t}{\textsf{t}}
\newcommand{\w}{\textsf{w}}
\newcommand{\vv}{\textsf{v}}
\newcommand{\R}{\textsf{R}}
\newcommand{\W}{\textsf{W}}
\renewcommand{\u}{\textsf{u}}
\newcommand{\IWMC}{\textnormal{\textsf{IWMC}}}

\newcommand{\To}{\Rightarrow}
\newcommand{\nmodels}{\not\models}
\newtheoremstyle{named}{}{}{\itshape}{}{\bfseries}{.}{.5em}{#1 \thmnote{#3}}
\theoremstyle{named}
\newtheorem*{proposition*}{\textbf{Proposition}}
\newtheorem*{lemma*}{\textbf{Lemma}}

\usepackage[normalem]{ulem}
\begin{document}
	\maketitle
	
	\begin{abstract}
		In recent work \cite{Ciardelli:25}, an inquisitive first-order modal logic has been proposed to reason about relations of modal dependence, including the notion of global supervenience (functional dependence among the extensions of predicates relative to a space of possibilities). 
		At present, no proof system exists for this logic. 
		We provide a complete labelled sequent calculus, extending a calculus developed by Litak and Sano \cite{LitakSano:25} for a weak version of inquisitive first-order logic. We prove strong completeness for the calculus and show that it enjoys desirable structural properties, including the invertibility of its rules and the admissibility of cut. 
	\end{abstract}

	\section{Introduction}
	\label{sec:intro}
	
	Inquisitive logics \cite{Ciardelli:18book,Ciardelli:23book,Puncochar:16generalization,Puncochar:19} extend existing versions of classical or non-classical logics with formulas representing questions. This extension is typically achieved by lifting the semantics from standard points of evaluations (e.g., valuation functions, relational models, or possible worlds) to sets of such points, called \emph{information states} (or just \emph{states} for short), relative to which a notion of \emph{support} is defined.\footnote{An analogous move is made in the \emph{team semantics} tradition, where assignments are replaced by sets of assignments called \emph{teams} (see, a.o., \cite{Hodges:97,Vaananen:07,Galliani:12,GradelVaananen:13}). For connections between these two lines of work see, among others, \cite{YangVaananen:16,Ciardelli:16dependency}.}
	
	The standard system of inquisitive first-order logic, \inqbq, extends classical first-order logic with two question-forming operators: inquisitive disjunction, $\lori$, and the inquisitive existential quantifier, $\existsi$. In terms of $\lori$, a polar question operator $?$ is defined by letting ${?}\phi:=(\phi\lori\neg\phi)$. In spite of important advances \cite{Grilletti:17,Grilletti:21,GrillettiCiardelli:23,CiardelliGrilletti:22}, the meta-theoretic properties of this system remain rather mysterious: in particular, it is not known whether \inqbq\ is recursively axiomatizable, or whether it is entailment-compact.\footnote{A logic is said to be entailment-compact if whenever a formula $\psi$ follows from a set of formulas $\Phi$, it follows from a finite subset $\Phi_0\subseteq\Phi$. In inquisitive logic, this is not equivalent to the more common formulation of compactness in terms of satisfiability, essentially due to the fact that the double negation law does not hold in general.}
	
	Recent work \cite{CiardelliGrilletti:22,Conti25} has focused on the $\existsi$-free fragment of this logic, denoted \inqwq. This fragment is sufficiently expressive to regiment a broad variety of questions; for instance, in addition to standard sentences like $\forall xPx$, regimenting the statement \quotes{every object is $P$}, \inqwq\ also contains sentences like ${?}\forall xPx$ and $\forall x{?}Px$, regimenting, respectively, the questions \quotes{whether or not every object is $P$} and \quotes{which objects are $P$}. 
	At the same time, \inqwq\ enjoys a strong semantic property which is not shared by all of \inqbq: \emph{finite coherence}.\footnote{The notion of coherence goes back to the team semantics literature, where it was first investigated by Jarmo Kontinen \cite{Kontinen:13}.} A formula $\phi$ is said to be $n$-coherent for a number $n\in \mathbb{N}$ if, in order to decide whether $\phi$ is supported by a state $s$, it suffices to check whether it is supported by the subsets $s'\subseteq s$ of cardinality up to $n$. As shown by Ciardelli and Grilletti \cite{CiardelliGrilletti:22}, every formula $\phi$ in \inqwq\ is $n$-coherent for some number $n$ which is computable from $\phi$. This fact has far-reaching repercussions: one can use it to show, among other things, that \inqwq\ is entailment-compact and recursively axiomatizable \cite{CiardelliGrilletti:22}. 
	
	In recent work, two complete proof systems have been provided for \inqwq: Conti \cite{Conti25}, building on \cite{CiardelliGrilletti:22}, gave a natural deduction system, while Litak and Sano \cite{LitakSano:25} gave a labelled sequent calculus. 
	Both systems rely crucially on the finite coherence property of \inqwq. 
	In the former system, this property shows up through a specific \emph{coherence rule} which allows one to discharge certain assumptions about the cardinality of the state of evaluations. In the latter system, coherence ensures that it is sufficient to work with labels which are finite sequences of indices, representing finite states. 
	
	In recent work, Ciardelli \cite{Ciardelli:25} investigated a modal logic \inqwqml\ obtained by extending \inqwq\ with a generalization of the Kripke modality $\Box$. The motivation for this work came from the analysis of \emph{global supervenience}, a notion of dependence that has received attention in the philosophical literature \cite{Kim:84,Stalnaker:96,McLaughlin:97,Bennett:04,Leuenberger:09}, where it has been invoked to give a precise formulation of certain philosophical theses, such as David Lewis's thesis of \emph{Humean supervenience} \cite{Lewis:86}. Intuitively, global supervenience captures the idea that the overall distribution of certain properties or relations in the world is fully determined by the overall distribution of other properties or relations. An example: if we fix who is a parent of whom, we thereby also fix who is a grandparent of whom; so, the \emph{grandparent-of} relation globally supervenes on the \emph{parent-of} relation.
	Formally, we may say that a predicate $Q$ globally supervenes on another predicate $P$ at a possible world $w$ if any two successors of $w$ which assign the same extension to $P$ also assign the same extension to $Q$. (The notion extends straightforwardly to the case of several predicates supervening on several other predicates.)
	Thus, global supervenience captures functional dependencies between the extensions of predicates across a space of possibilities.
	As Ciardelli \cite{Ciardelli:25} showed, global supervenience claims cannot be expressed in standard quantified modal logic, but they \emph{can} be expressed in \inqwqml\ in a particularly perspicuous way, namely, as strict conditionals whose antecedents and consequents are questions. Thus, e.g., the claim that $Q$ globally supervenes on $P$ is formalized by the strict conditional
	$$\Box(\forall x{?}Px\to\forall x{?}Qx)$$
	having as its antecedent the subvenient question $\forall x{?}Px$ (\quotes{which objects are $P$}), and as its consequent the supervenient question $\forall x{?}Qx$ (\quotes{which objects are $Q$}). As discussed in \cite{Ciardelli:25}, this analysis is insightful, since it allows us to trace back logical properties of global supervenience to familiar properties of strict conditionals and questions. The inquisitive modal logic \inqwqml, then, provides an attractive system to reason about global supervenience claims, as well as modal dependence claims more generally \cite{Ciardelli:18aiml}. 
	
	In view of this motivation, it seems important to have a proof system for this logic: such a system would allow one to formally establish the validity of certain inferences concerning modal dependencies and, ideally, it could be used to get further insight about the logic of these dependence notions. 
	While Ciardelli \cite{Ciardelli:25} proved (by means of a translation to two-sorted first-order logic) that the set of validities of \inqwqml\ is recursively enumerable, he left the development of a proof system as an open problem.
	
	In this paper, we fill this gap, providing a labelled sequent calculus for \inqwqml. Our calculus builds on Litak and Sano's calculus for \inqwq\  \cite{LitakSano:25}, where labels are finite sequences $\textsf{w}_1\dots \textsf{w}_n$ of indices;\footnote{One may wonder if it is also possible to obtain a proof system for \inqwqml\ by extending the natural deduction system in~\cite{Conti25}. This seems difficult, since the completeness proof in \cite{Conti25} relies on the finite model property of \inqwq, which does not extend to \inqwqml. Regardless of this problem, the labelled approach of \cite{LitakSano:25} seems preferable due to its better proof-theoretic properties.} intuitively, indices represent worlds, and so, labels represent finite states. To this calculus we add rules for the modality $\Box$. These rules are inspired by the standard rules for $\Box$ in labelled sequent calculi \cite{Negri:05} but, at the same time, they use in a crucial way the finite coherence property of \inqwqml, inherited from \inqwq: every formula $\phi$ in \inqwqml\ is $n_\phi$-coherent for some number $n_\phi$ computable from $\phi$ \cite{Ciardelli:25}. 
	This is crucial, in particular, for our right rule for $\Box$. 
	Semantically, $\Box\phi$ is true at a world $w$ if $\phi$ is supported by the state consisting of all the successors of $w$; however, by finite coherence, this reduces to $\phi$ being supported by all sets consisting of at most $n_\phi$-many successors of $w$. This fact allows us to formulate a right rule for $\Box$ that, in essence, says the following: in order to prove a labelled formula $\textsf{w}:\Box\phi$, introduce $n_\phi$-many fresh indices $\textsf{v}_1\dots \textsf{v}_{n_\phi}$ standing for successors of $\textsf{w}$, and aim to prove that $\textsf{v}_1\dots \textsf{v}_{n_\phi}:\Box\phi$.

	In addition to establishing the soundness and completeness of this system, we prove the invertibility of its rules, and the admissibility of the rules of weakening, contraction, and cut.
	
	Inquisitive modal logic is currently an active and rapidly growing field. While several proof systems have been developed for  \emph{propositional} inquisitive modal logics, including labelled sequent calculi \cite{Muller:24,Muller:26}, our calculus represents the first example of a proof system for a \emph{first-order} inquisitive modal logic. We think that, in addition to its direct relevance as a tool to study the logic \inqwqml, our contribution might serve as a model to build proof systems for other first-order inquisitive modal logics.
	
	The paper is structured as follows: Section 2 provides the relevant background on the logic \inqwqml; Section 3 presents our contribution; Section 4 outlines directions for future work.
	
	\section{Background}\label{Section:background}
	In this section, we provide the necessary technical background on the inquisitive modal logic \inqwqml. For a thorough presentation of this logic and proofs of the facts mentioned in this section, we refer to \cite{Ciardelli:25}. For a more general introduction to inquisitive logic, see \cite{Ciardelli:23book}.
	\paragraph{Syntax.}
	We start with a countable first-order signature $\Sigma$. For simplicity, we focus on the case where $\Sigma$ is a set of predicates (each having an associated \textit{arity}), and doesn't contain function symbols nor the identity symbol. The language of \inqwqml{} is given by the following definition, where $P\in\Sigma$ is an $n$-ary predicate and $x,x_1,\dots,x_n$ range over $Var$, a countably infinite set of variables:
	$$\phi\Coloneqq\bot\mid P(x_1,\dots,x_n)\mid\phi\land\phi\mid\phi\lori\phi\mid\phi\to\phi\mid\forall x\phi\mid\Box\phi$$
	We also consider the following defined symbols: $\lnot\phi=\phi\to\bot$, $\phi\lor\psi=\lnot(\lnot\phi\land\lnot\psi)$, $\exists x\phi=\lnot\forall x\lnot\phi$ and the inquisitive operator ${?}\phi=\phi\lori\lnot\phi$.
	We call formulas where $\lori$ does not appear \textit{classical formulas}. These formulas can be identified with those of standard modal logic, with a particular choice of primitive operators. The operator $\lori$ is called \textit{inquisitive disjunction}; informally, it is intended to formalize disjunctive questions. Thus, for instance, whereas the classical disjunction $P(x)\lor\neg P(x)$ regiments the (tautological) statement that \emph{either $x$ is $P$ or it isn't}, the inquisitive disjunction $P(x)\lori\lnot P(x)$ regiments the question \emph{whether $x$ is $P$ or not} (which explains why this formula is abbreviated as ${?}P(x)$).

	\paragraph{Semantics.}
	Models for \inqwqml{} are defined as tuples $M=\langle W,D,R,I\rangle$, where $W\neq\emptyset$ is a set of possible worlds, $D\neq\emptyset$ is a domain of individuals, $R\subseteq W\times W$ is an accessibility relation, associating each world $w\in W$ with a set of \emph{successors} $R[w]=\{v\in W\mid w R v\}$, and $I$ maps each $w\in W$ to an interpretation function that assigns to each $n$-ary predicate $P$ a corresponding extension $I_w(P)\subseteq D^n$. 
	An \textit{information state} (or, simply, a \textit{state}) is a subset of $W$. As usual, an assignment is a function $g:Var\to D$. Given an assignment $g$, a variable $x$, and an individual $d\in D$, $g[x\mapsto d]$ is the assignment that maps $x$ to $d$ and agrees with $g$ on all other variables.

	In \inqwqml{}, formulas $\phi$ are evaluated in terms of a relation of support, written as $M,s\models_g\phi$, relative to a model $M$, a state $s$ and an assignment $g$. We define support conditions inductively as follows:
	\begin{itemize}
		\item $M,s\models_g \bot \iff s=\emptyset$
		\item $M,s\models_g P(x_1,...,x_n)\iff \text{for all }w\in s,\ \langle g(x_1),...,g(x_n)\rangle\in I_w(P) $
		\item $M,s\models_g \varphi\land\psi \iff M,s\models_g \varphi \text{ and } M,s\models_g\psi$
		\item $M,s\models_g \varphi\lori\psi\iff M,s\models_g\varphi \text{ or } M,s\models_g\psi$
		\item $M,s\models_g\varphi\rightarrow\psi\iff \text{for all }t\subseteq s,\ M,t\models_g \varphi \text{ implies }M,t\models_g\psi $
		\item $M,s\models_g \forall x \varphi\iff \text{for all }d\in D,\ M,s\models_{g[x\mapsto d]}\varphi$
		\item $M,s\models_g \Box\phi \iff\text{for all }w\in s:\ M,R[w]\models_g \phi$
	\end{itemize}
	

\noindent	
Entailment is defined as preservation of support: for a set of formulas $\Phi\cup\{\psi\}$, $\Phi\models\psi$ if for all models $M$, states $s$ and assignments $g$, $M,s\models_g\phi$ for all $\phi\in\Phi$ implies $M,s\models_g\psi$. Additionally, we define a restriction of entailment to states $s$ of cardinality at most $n$ (in symbols, $\#s\le n$):  $\Phi\models_n\psi$ if for all models $M$, states $s$ such that $\# s\leq n$ and assignments $g$, $M,s\models_g\phi$ for all $\phi\in\Phi$ implies $M,s\models_g\psi$.

	The above semantics for \inqwqml{} satisfies the following standard properties of inquisitive logics:
	\begin{itemize}
		\item Persistency: if $M,s\models_g\phi$, then for all $t\subseteq s$, $M,t\models_g\phi$
		\item Empty state property: $M,\emptyset\models_g\phi$ for any formula $\phi$
	\end{itemize}
	
	\noindent
	Another important desideratum of inquisitive extensions is conservativity over the original system being extended. We say that a formula $\phi$ is true at a world $w$ (under $g$) if $\{w\}\models_g\phi$. It is easy to check that, for classical formulas, the truth conditions delivered by our definition coincide with those given by standard Kripke semantics. Moreover, classical formulas are \textit{truth-conditional}, meaning that they are supported by a state iff they are true at each of its possible worlds. Using these facts, it is easy to see that entailment among classical formulas in \inqwqml\ coincides with entailment in standard constant-domain quantified modal logic. \inqwqml{} can thus be seen as a conservative extension of the latter logic. Note, furthermore, that the support clause for $\Box$ implies that $\Box\phi$ is always truth-conditional for any $\phi$.
	
\paragraph{Coherence.}	A key property of formulas in \inqwqml\ is \emph{finite coherence} \cite{Kontinen:13,CiardelliGrilletti:22}, defined as follows.
	
	\begin{definition}[$n$-coherence] Given a state $t$, we denote its cardinality by $\#t$. 
		We say that a formula $\phi$ is $n$-coherent for $n\in\N$ if for all $M,s,g$:
		$$M,s\models_g\phi\iff\text{ for all $t\subseteq s$ such that $\# t\leq n$, $M,t\models_g\phi$}$$
	\end{definition}
	
	\noindent
	Intuitively, $n$-coherence of $\phi$ means that in order to check if $\phi$ is supported at a state, it suffices to check if it is supported by all substates of cardinality at most $n$. Crucially, this implies that if $\phi$ is \emph{not} supported at a state, then there is some finite, at most $n$-sized substate that doesn't support $\phi$.
	Notice that, in particular, $1$-coherence coincides with being truth-conditional.
	\begin{proposition}[\protect{\cite[Prop. 6.3]{Ciardelli:25}}]\label{Finite-coherence}
		For every formula $\phi$ of \inqwqml{} there is a number $n_\phi\in\N$, computable from $\phi$, such that $\phi$ is $n_\phi$-coherent. In particular, $n_\phi$ can be computed inductively as follows: 
		$n_\varphi=1$ for any atomic formula; $n_{(\psi\land\xi)}=max\{n_{\psi},n_{\xi}\}$; $n_{(\psi{\footnotesize\lori}\xi)}=n_\psi+n_\xi$; $n_{(\psi\rightarrow\xi)}=n_{\xi}$; $n_{(\forall x_i \psi)}=n_{\psi}$ and $n_{\Box\varphi}=1$. 
	\end{proposition}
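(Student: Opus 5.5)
We prove, by induction on the structure of $\phi$, that $\phi$ is $n_\phi$-coherent for the number $n_\phi$ given by the clauses in the statement; computability is then immediate, since these clauses define $n_\phi$ by recursion on $\phi$. Persistency already gives the left-to-right direction of the biconditional in the definition of $n$-coherence, for every $n$. So in each case we only need the converse: assuming every $t\subseteq s$ with $\#t\le n_\phi$ supports $\phi$, we show that $s$ supports $\phi$. Two observations will be used throughout. First, $n$-coherence implies $m$-coherence for all $m\ge n$, again by persistency. Second, a formula is $1$-coherent exactly when it is truth-conditional.

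\emph{Base and easy cases.} Atoms are truth-conditional by their support clause, and so is $\bot$. For $\bot$, note that if every singleton subset of $s$ is empty, then $s=\emptyset$. For $\Box\phi$, the support clause quantifies over worlds $w\in s$, so it is truth-conditional, hence $1$-coherent. For $\psi\land\xi$, both conjuncts are $\max\{n_\psi,n_\xi\}$-coherent by monotonicity, and the clause for $\land$ gives the claim. For $\forall x\psi$, we apply the induction hypothesis to $\psi$ under each assignment $g[x\mapsto d]$, since $n_\psi$-coherence is uniform in $g$. For $\psi\to\xi$, take $t\subseteq s$ with $t\models\psi$; we must show $t\models\xi$. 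By the induction hypothesis for $\xi$, it suffices to check every $u\subseteq t$ with $\#u\le n_\xi$. Such a $u$ is also a subset of $s$ of size at most $n_\xi$, so $u\models\psi\to\xi$ by assumption. Moreover $u\models\psi$ by persistency, so $u\models\xi$.

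\emph{Main case: $\psi\lori\xi$.} This is the only step that requires real work. Suppose $s\not\models\psi\lori\xi$, so $s\not\models\psi$ and $s\not\models\xi$. By the induction hypothesis there are $t_1\subseteq s$ with $\#t_1\le n_\psi$ and $t_1\not\models\psi$, and $t_2\subseteq s$ with $\#t_2\le n_\xi$ and $t_2\not\models\xi$. Let $t=t_1\cup t_2$; then $\#t\le n_\psi+n_\xi$ and $t\subseteq s$. By persistency, $t\not\models\psi$ (since $t_1\subseteq t$) and $t\not\models\xi$ (since $t_2\subseteq t$), so $t\not\models\psi\lori\xi$. This is the contrapositive of what we need. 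The only delicate point is making sure that persistency holds for the full language including $\Box$, but this is given among the standard properties listed above.
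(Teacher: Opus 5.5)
Your proof is correct and takes essentially the same approach as the paper: the paper does not reprove this result but imports it from \cite[Prop.~6.3]{Ciardelli:25}, and its inductively given estimates are exactly what your structural induction produces. Persistency gives the easy direction, the union of the two small non-supporting substates handles $\lori$, and truth-conditionality handles $\Box$.
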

	\noindent	
	Importantly, combining this result with persistency implies that the validity of any \inqwqml{} entailment can be reduced to its validity over a class of finite states with bounded cardinality.
	\begin{proposition}\label{Entailment-finite-coherence}
		For any set of \inqwqml{} formulas $\Phi\cup\{\psi\}$, $\Phi\models\psi\iff\Phi\models_{n_\psi}\psi$. 
	\end{proposition}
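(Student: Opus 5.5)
The plan is to prove the two directions separately. The left-to-right direction is immediate from the definitions: $\models_{n_\psi}$ quantifies over a subclass of the triples $M,s,g$ that $\models$ quantifies over, so any entailment valid over all states is in particular valid over states with $\#s\le n_\psi$.

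For the right-to-left direction, assume $\Phi\models_{n_\psi}\psi$ and fix a model $M$, a state $s$ and an assignment $g$ with $M,s\models_g\phi$ for every $\phi\in\Phi$. We must show $M,s\models_g\psi$. By Proposition~\ref{Finite-coherence}, $\psi$ is $n_\psi$-coherent, so it suffices to check that $M,t\models_g\psi$ for every $t\subseteq s$ with $\#t\le n_\psi$.

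Fix such a $t$. By persistency, $M,t\models_g\phi$ for every $\phi\in\Phi$, since each of these is supported at $s\supseteq t$. Since $\#t\le n_\psi$, the hypothesis $\Phi\models_{n_\psi}\psi$ applies to $M,t,g$ and yields $M,t\models_g\psi$. As $t$ was arbitrary, coherence gives $M,s\models_g\psi$.

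Two points need care. First, the empty substate is covered, either by the empty state property or directly by the hypothesis, since $\#\emptyset=0\le n_\psi$. Second, we use only the coherence of $\psi$, never that of the premises in $\Phi$; this is why the argument also works when $\Phi$ is infinite. There is no real obstacle here: all the substantive work lies in Proposition~\ref{Finite-coherence}, which we take as given.
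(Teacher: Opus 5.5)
Your proof is correct and follows the same route as the paper. The paper justifies this proposition only by remarking that it follows from combining the $n_\psi$-coherence of $\psi$ (Proposition~\ref{Finite-coherence}) with persistency. Your argument spells out exactly that combination, and you correctly note that only the coherence of $\psi$ is used, so $\Phi$ may be infinite.
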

	%
	
	\noindent
	Using this fact, one can show that \inqwqml\ is \emph{entailment-compact}, in the following sense.
	\begin{proposition}[Entailment-compactness, \protect{\cite[Prop. 6.8]{Ciardelli:25}}]\label{compactness}
		For any set of \inqwqml{} formulas $\Phi\cup\{\psi\}$, if $\Phi\models\psi$, then $\Phi_0\models\psi$  for some finite $\Phi_0\subseteq\Phi$.
	\end{proposition}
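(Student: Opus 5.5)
The plan is to reduce the claim to compactness of classical first-order logic, using a translation into a two-sorted first-order language; this is the same route by which \cite{Ciardelli:25} obtains recursive enumerability. By Proposition~\ref{Entailment-finite-coherence}, $\Phi\models\psi$ holds iff $\Phi\models_{n}\psi$ holds, where $n=n_\psi$. So it suffices to show that $\models_n$ is compact: if $\Phi\models_n\psi$, then $\Phi_0\models_n\psi$ for some finite $\Phi_0\subseteq\Phi$. Applying Proposition~\ref{Entailment-finite-coherence} once more then gives $\Phi_0\models\psi$.

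To show compactness of $\models_n$, I will encode support at states of size at most $n$ as first-order satisfaction. I use a two-sorted signature with a world sort and an individual sort, a binary relation $R$ on worlds, and for each $k$-ary $P\in\Sigma$ a $(k{+}1)$-ary predicate $P^\ast(w,\bar x)$. For each formula $\phi$ and each $m\le N$, where $N$ is some fixed bound, I define a first-order formula $\mathrm{Sup}_\phi(w_1,\dots,w_m)$ expressing that the state $\{w_1,\dots,w_m\}$ supports $\phi$. The definition goes by induction on $\phi$:
\begin{itemize}
\item atoms become conjunctions over the $w_i$;
\item $\land$ and $\lori$ become $\land$ and $\lor$;
\item $\forall x$ becomes quantification over the individual sort;
\item $\to$ becomes a finite conjunction over the subsets of $\{w_1,\dots,w_m\}$, where each conjunct is an implication between the translations at that subset.
\end{itemize}
The $\Box$ clause is the delicate one, because $R[w]$ may be infinite. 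Here I use that $\Box\phi$ is truth-conditional and that $\phi$ is $n_\phi$-coherent (Proposition~\ref{Finite-coherence}). The state $R[w]$ supports $\phi$ iff every subset of $R[w]$ of size at most $n_\phi$ supports it. So I translate $\Box\phi$ at $w_1,\dots,w_m$ as
\[
\bigwedge_i \forall v_1\dots v_{n_\phi}\Bigl(\textstyle\bigwedge_j R w_i v_j\to \mathrm{Sup}_\phi(v_1,\dots,v_{n_\phi})\Bigr),
\]
where repeated $v_j$ take care of smaller subsets. The empty state needs a convention: either I allow $m=0$ with $\mathrm{Sup}_\phi()=\top$, which is sound by the empty state property, or I add the case of an empty $R[w]$ explicitly.

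Next I prove by induction on $\phi$ that, for any model $M$ viewed as a two-sorted structure, $M,\{w_1,\dots,w_m\}\models_g\phi$ iff that structure satisfies $\mathrm{Sup}_\phi(\bar w)$ under $g$. The $\to$ case uses the fact that subsets of a finite state are finite. The $\Box$ case uses coherence exactly as above. Conversely, every two-sorted structure with nonempty sorts is a model in the original sense.

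With the translation in place, fix fresh world constants $c_1,\dots,c_n$. Then $\Phi\models_n\psi$ iff the first-order set
\[
\{\mathrm{Sup}_\phi(\bar c)\mid\phi\in\Phi\}\cup\{\neg\mathrm{Sup}_\psi(\bar c)\}
\]
is unsatisfiable. Repeated constants cover states of size below $n$, and assignments are handled by free variables or constants. First-order compactness then yields a finite unsatisfiable subset, hence a finite $\Phi_0$ with $\Phi_0\models_n\psi$, and the result follows.

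I expect the main obstacle to be the $\Box$ clause of the translation and its correctness proof. It is the only point where an infinite state ($R[w]$) must be captured with finitely many variables, so it relies essentially on Proposition~\ref{Finite-coherence}. The edge cases of repeated or empty tuples also need care there. Everything else is routine bookkeeping.
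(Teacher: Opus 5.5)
Your proposal is correct and follows the route the paper points to. The paper gives no proof of its own: it cites \cite{Ciardelli:25} and remarks that compactness follows from Proposition~\ref{Entailment-finite-coherence}, i.e.\ you reduce $\Phi\models\psi$ to $\Phi\models_{n_\psi}\psi$, encode support at states of size at most $n_\psi$ in two-sorted first-order logic (using $n_\phi$-coherence for the $\Box$ clause over a possibly infinite $R[w]$), and apply first-order compactness. One small point: the fixed bound $N$ is unnecessary, and a single $N$ uniform over an infinite $\Phi$ need not exist, so simply define $\mathrm{Sup}_\phi(w_1,\dots,w_m)$ for every $m$ by recursion on $\phi$.
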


	\section{Labelled sequent calculus and completeness proof}\label{Section-Results}
	In this section, we describe a labelled sequent calculus called \IWMC{} (for inquisitive weak modal calculus) and prove its soundness and strong completeness for \inqwqml{}.  
%

	Intuitively, labels represent finite states, and labelled formulas translate the support relation. Formally, labels are non-empty finite sets of indices, where each index is a natural number.
	\footnote{We exclude empty labels for simplicity. While the empty state is officially allowed in the semantics of \inqwqml{}, it plays a trivial role, since it supports any formula whatsoever, and can be omitted without affecting the logic.} We use the meta-variables $\w,\textsf{v},\textsf{u}$ for indices, and $\s,\t,\u,\dots$ for labels. We consider two kinds of expressions: 
	\begin{itemize}
		\item Labelled formulas of the form $\s:\phi$, where $\phi$ is a formula and $\s\subseteq_{\text{fin}}\N$ 
		\item Relational atoms of the form $\w\R\vv$ where $\w,\vv\in\N$.
	\end{itemize}
	\noindent
	As suggested by the notation, the former intuitively represent support at a state, while the latter syntactically encode instances of the accessibility relation. 
A sequent is an ordered pair $\Gamma\To\Delta$ consisting of finite multisets $\Gamma$ and $\Delta$, where $\Gamma$ can contain labelled formulas and relational atoms, while $\Delta$ contains only labelled formulas.

Below, we provide the rules of the sequent calculus \IWMC. We write $\w\R\t$ for the set $\{\w\R\vv\mid \vv\in\t\}$. The universe $\W_{\Gamma,\Delta}$ of a sequent $\Gamma\To\Delta$ is the set of indices that occur in it, within labels or in relational atoms.


\newcommand{\rulename}[1]{\textnormal{\textsf{$(#1)$}}}

\begin{figure}[H]
	\centering
	{\large The Sequent Calculus \IWMC\\[.5cm]}\par\smallskip

	\setlength{\arraycolsep}{1.6em}
	\renewcommand{\arraystretch}{1.35}
	$\begin{array}{@{}c@{\qquad}c@{}}
		\infer[\rulename{\text{\textsf{id}}}\;\text{\small where } \s\supseteq \t]
		{\s:P(\bar x),\,\Gamma \To \Delta,\,\t:P(\bar x)}
		{}
		&
		\infer[\rulename{\bot{\To}}]{\s:\bot,\,\Gamma \To \Delta}{}\\[2.4ex]
		
		\multicolumn{2}{c}{
			\infer[\rulename{{\To} \text{\textsf{at}}}]
			{\Gamma \To \Delta,\,\s:P(\bar x)}
			{\{\,\Gamma \To \Delta,\,\{k\}:P(\bar x)\mid k\in \s\,\}}
		}
		\\[2.8ex]
		
		\infer[\rulename{{\To}\land}]
		{\Gamma \To \Delta,\,\s:\phi\land\psi}
		{\Gamma \To \Delta,\,\s:\phi \qquad \Gamma \To \Delta,\,\s:\psi}
		&
		\infer[\rulename{\land{\To}}]
		{\s:\phi\land\psi,\,\Gamma \To \Delta}
		{\s:\phi,\,\s:\psi,\,\Gamma \To \Delta}
		\\[2.4ex]
		
		\infer[\rulename{{\To}\lori}]
		{\Gamma \To \Delta,\,\s:\phi\lori\psi}
		{\Gamma \To \Delta,\,\s:\phi,\,\s:\psi}
		&
		\infer[\rulename{\lori{\To}}]
		{\s:\phi\lori\psi,\,\Gamma \To \Delta}
		{\s:\phi,\,\Gamma \To \Delta \qquad \s:\psi,\,\Gamma \To \Delta}
		\\[2.8ex]
		
		\multicolumn{2}{c}{
			\infer[\rulename{{\To}{\to}}]
			{\Gamma \To \Delta,\,\s:\phi\to\psi}
			{\{\,\t:\phi,\,\Gamma \To \Delta,\,\t:\psi \mid \t\subseteq\s\,\}}
		}
		\\[2.8ex]
		
		\multicolumn{2}{c}{
			\infer[\rulename{{\to}{\To}}\;\text{\small where } \t\subseteq\s]
			{\s:\phi\to\psi,\,\Gamma \To \Delta}
			{\s:\phi\to\psi,\,\Gamma \To \Delta,\,\t:\phi
				\qquad
				\t:\psi,\,\s:\phi\to\psi,\,\Gamma \To \Delta}
		}
		\\[3.0ex]
		
		\infer[\rulename{{\To}\forall}^{\dagger}]
		{\Gamma \To \Delta,\,\s:\forall x\phi}
		{\Gamma \To \Delta,\,\s:\phi[z/x]}
		&
		\infer[\rulename{\forall{\To}}]
		{\s:\forall x\phi,\,\Gamma \To \Delta}
		{\s:\phi[y/x],\,\s:\forall x\phi,\,\Gamma \To \Delta}
		\\[2.4ex]
		%
		$\infer[\rulename{{\To}\Box}^{\ddagger}]{\Gamma\;\To\;\Delta,\s:\Box\phi}{\{\w\R\t,\Gamma\;\To\;\Delta, \t:\phi \mid \w\in\s\}}$
		&
		\infer[\rulename{\Box{\To}}\text{ where $\w\in\s$}]{\s:\Box\phi,\;\w\R\t,\;\Gamma\;\To\;\Delta}{\t:\phi,\;\s:\Box\phi,\;\w\R\t,\;\Gamma\;\To\;\Delta}
		\\[1.5ex]
		\multicolumn{2}{c}{
			\textnormal{\footnotesize $\dagger=$ \lq\lq{}$z$ does not occur in the conclusion" \hspace*{1 cm}$\ddagger=$ \lq\lq$\# t=n_\phi$ and $t\cap (W_{\Gamma,\Delta}\cup s)=\emptyset$"}
		}
	\end{array}
	$			
\end{figure}

\noindent		
As mentioned in the introduction, the rules $(\Box{\To})$ and $({\To}\Box)$ are inspired by standard rules for $\Box$ in labelled sequent calculi \cite{Negri:05}. The side condition for the right rule, however, is connected to the finite coherence property stated in Proposition \ref{Finite-coherence}. The rule can be read as follows: if for each world $w$ in a state $s$, an arbitrary set $t$ of successors 
of $w$ of cardinality $\le n_\phi$ supports $\phi$, then $s$ supports $\Box\phi$. This rule is sound (as we will see) since the support of $\phi$ on an arbitrary set $t$ of at most $n_\phi$-many successors of $w$ guarantees its support on the whole set $R[w]$ of successors by Proposition \ref{Finite-coherence}.
\begin{definition}[Derivability]\label{def:derivability}
	A derivation in \IWMC{} is a finite tree generated using the inference rules of \IWMC{} whose initial sequents are instances of $(\textsf{id})$ or $(\bot{\To})$. We say that a sequent $\Gamma\To\Delta$ is derivable in \IWMC{}, written $\vdash\Gamma\To\Delta$, if there is a derivation having $\Gamma\To\Delta$ as its root. 
\end{definition}

\noindent
It is clear by inspecting the rules that \IWMC{} satisfies an appropriate notion of analyticity. Let the set of subformulas of a formula $\phi$, $\textsf{SubF}(\phi)$ be given in the usual way. Then, if we define the set of subexpressions of $\phi$ as $\textsf{Sub}(\phi)=\textsf{SubF}(\phi)\cup\{\psi[y/x]\mid \psi\in\textsf{SubF}(\phi),\ y\text{ free for $x$ in $\psi$}\}$, the following holds.

\begin{proposition}[Analyticity]
Given a derivation of a sequent $\Gamma\To\Delta$ in \IWMC{}, for any labelled formula $\t:\psi$ appearing in the derivation there is some labelled formula $\s:\phi\in\Gamma\cup\Delta$ such that $\psi\in\textsf{Sub}(\phi)$.
\end{proposition}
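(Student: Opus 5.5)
The plan is to prove the statement by induction on the structure of the derivation, going from the root upwards. I will show that every rule of \IWMC{} has the \emph{subformula-like} property: each labelled formula in a premise is either a labelled formula of the conclusion, or of the form $\t:\chi$ where $\chi\in\textsf{Sub}(\phi)$ for some labelled formula $\s:\phi$ in the conclusion. Relational atoms can be ignored, since the statement concerns only labelled formulas.

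The key step is that $\textsf{Sub}$ is transitive enough. That is, if $\chi\in\textsf{Sub}(\phi)$ and $\chi'\in\textsf{Sub}(\chi)$ arises via one rule step, then $\chi'\in\textsf{Sub}(\phi)$. Given this, an easy induction on the distance of a node from the root yields the claim: a formula occurring at some node traces back, through finitely many rule applications, to a formula of the root sequent $\Gamma\To\Delta$.

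The rule-by-rule check is routine. The initial sequents have no premises. $(\To\text{\textsf{at}})$ keeps the formula and only changes the label. The rules $(\To\land)$, $(\land\To)$, $(\To\lori)$, $(\lori\To)$, $(\To\to)$, $(\to\To)$, $(\To\Box)$ and $(\Box\To)$ introduce only immediate subformulas, with possibly different labels. The principal formula itself is retained in $(\to\To)$, $(\forall\To)$ and $(\Box\To)$. The quantifier rules introduce $\phi[z/x]$ or $\phi[y/x]$.

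The main obstacle is closing $\textsf{Sub}$ under these steps, because substitution instances are only one level deep. Suppose $\chi=\psi[y/x]$ with $\psi\in\textsf{SubF}(\phi)$, and consider a subformula of $\chi$ or a further instance $\chi'[z/x']$ of a universally quantified subformula $\forall x'\chi'$ of $\chi$. I would show this is again of the form $\psi'[y'/x'']$ for some $\psi'\in\textsf{SubF}(\phi)$. For immediate subformulas, substitution commutes with taking subformulas: a subformula of $\psi[y/x]$ is $\psi_0[y/x]$ for a subformula $\psi_0$ of $\psi$, where the substitution may be vacuous. Iterated substitutions $\psi[y/x][z/x']$, however, are not literally single substitutions. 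So I would either read $\textsf{Sub}$ as allowing simultaneous substitution of variables for variables, which the intended ``usual'' reading presumably covers, or argue that any such iterate is a variable renaming of a subformula of $\phi$. The latter still has the same shape, which is all that analyticity needs.

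Under the first reading, I would state and use the slightly stronger invariant: every formula in the derivation is $\psi\sigma$, where $\psi\in\textsf{SubF}(\phi)$ for some root formula $\phi$ and $\sigma$ is a variable-for-variable substitution. This invariant is straightforwardly preserved by each rule, and it yields the proposition.
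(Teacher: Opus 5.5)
Your proposal is correct and is essentially the paper's own argument. The paper justifies analyticity only by saying it is clear from inspecting the rules, and your induction over the derivation, checking that every premise formula is an immediate subformula or a variable instance of a conclusion formula, is that inspection made explicit.

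Your caveat about iterated instantiation is also well founded, and it goes beyond what the paper says. For example, a derivation of $\{1\}:\forall x\forall yP(x,y)\To\{1\}:\forall x\forall yP(x,y)$ using fresh eigenvariables $z,w$ passes through $\{1\}:P(z,w)$. That formula is not in $\textsf{Sub}(\forall x\forall yP(x,y))$ if $\textsf{Sub}$ allows only a single substitution of one variable for another. So your strengthened invariant, with simultaneous variable-for-variable substitutions, is the reading under which the statement actually holds.
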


\subsection{Interpretation and validity for sequents}

We start by defining a notion of satisfaction for labelled formulas, relational atoms and labelled sequents. For this, we make use of the notion of \textit{mappings} into a model, i.e. functions associating each natural number to some possible world in the universe of the model.
\begin{definition}[Satisfaction]
	Given a model $M=\langle W,R,D,I\rangle$, an assignment $g:Var\to D$ and a mapping $f:\N\to W$, we define satisfaction for labelled formulas $\s:\phi$ and relational atoms $\w\R\vv$ as follows, 
	\vspace*{-.1cm}
	\begin{align*}
		&M,f\Vdash_g \s:\phi\iff M,f[\s]\models_g\phi\\[3.5pt]
		&M,f\Vdash_g \w\R\vv\iff f(\w) R f(\vv)
	\end{align*}
	For multisets $\Gamma$ of labelled formulas or relational atoms, we write $M,f\Vdash_g \Gamma$ if $M,f\Vdash_g\gamma$ for all $\gamma\in\Gamma$. Note that in particular, for a set $\w\R\s=\{\w\R\vv\mid\vv\in\s\}$ of relational atoms we have
	$$M,f\Vdash_g\w\R\s\iff f(\w)R f(\vv)\text{ for all }\vv\in\s\iff f[\s]\subseteq R[f(\w)]$$
	Finally, for sequents $\Gamma\To\Delta$, we let: 
	$$\text{$M,f\Vdash_g \Gamma\To\Delta\iff M,f\Vdash_g\Gamma$ implies $M,f\Vdash_g\delta$ for some $\delta\in\Delta$}$$
\end{definition}

\noindent
We may now define a sequent to be \emph{valid} if it is satisfied under any interpretation.

\begin{definition}[Validity] We say that a sequent $\Gamma\To\Delta$ is valid, and write $\models\Gamma\To\Delta$, if for any model $M$, assignment $g$ and mapping $f$ into $M$ we have $M,f\Vdash_g \Gamma\To\Delta$.
\end{definition}

\noindent
The notion of validity for sequents is connected to the notion of entailment between \inqwqml-formulas via coherence. Indeed, due to Proposition \ref{Entailment-finite-coherence}, an entailment $\Phi\models\psi$ is valid iff it is valid on arbitrary states of cardinality at most $n_\psi$ (where $n_\psi$ is the coherence estimate for $\psi$ given by \ref{Finite-coherence}). These states are exactly the possible interpretations, via mappings, of a label $\s$ of cardinality $n_\psi$. This underpins the following connection.

\begin{proposition}\label{prop:connection} For any finite set of \inqwqml{} formulas $\Phi\cup\{\psi\}$ and an arbitrary label $\s$ with $\#\s\ge n_\psi$, letting $\s:\Phi$ denote $\{\s:\phi\mid \phi\in\Phi\}$ we have
$$\Phi\models\psi\quad\text{ iff }\quad \text{the sequent }(\s:\Phi\To \s:\psi)\text{ is valid}$$
\end{proposition}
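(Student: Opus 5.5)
The plan is to prove the two directions separately. Both rest on the observation that, as $f$ ranges over mappings $\N\to W$, the sets $f[\s]$ are exactly the non-empty states $t\subseteq W$ with $\#t\le\#\s$. Indeed, $f[\s]$ is non-empty (labels are non-empty) and has at most $\#\s$ elements. Conversely, any non-empty $t$ with $\#t\le\#\s$ is the image of $\s$ under some surjection $\s\to t$, which we extend arbitrarily to all of $\N$. By the definition of satisfaction, $M,f\Vdash_g\s:\phi$ iff $M,f[\s]\models_g\phi$. So validity of $\s:\Phi\To\s:\psi$ says: for all $M,g$ and all non-empty states $t$ with $\#t\le\#\s$, if $M,t\models_g\phi$ for every $\phi\in\Phi$, then $M,t\models_g\psi$.

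\emph{Left to right.} Assume $\Phi\models\psi$ and fix $M,g,f$ with $M,f\Vdash_g\s:\phi$ for all $\phi\in\Phi$. Then $f[\s]$ supports every formula in $\Phi$. By the definition of entailment, $f[\s]$ supports $\psi$, which means $M,f\Vdash_g\s:\psi$. This direction holds for any label, with no use of the cardinality assumption.

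\emph{Right to left.} Assume the sequent is valid. By Proposition \ref{Entailment-finite-coherence}, it suffices to show $\Phi\models_{n_\psi}\psi$. Take $M$, $g$ and a state $t$ with $\#t\le n_\psi$ supporting all of $\Phi$. If $t=\emptyset$, then $M,t\models_g\psi$ by the empty state property. Otherwise $t$ is non-empty with $\#t\le n_\psi\le\#\s$, so by the observation above there is a mapping $f$ with $f[\s]=t$. Then $M,f\Vdash_g\s:\Phi$, validity gives $M,f\Vdash_g\s:\psi$, and hence $M,t\models_g\psi$.

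The only delicate step is constructing the surjective mapping. This is where the hypothesis $\#\s\ge n_\psi$ is used, together with the empty-state case, which labels cannot represent. Both are handled as above; the rest is unfolding definitions. Finiteness of $\Phi$ is not needed, beyond ensuring that $\s:\Phi$ is a finite multiset and so a legitimate sequent.
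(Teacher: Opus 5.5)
Your proof is correct and follows essentially the paper's argument. The left-to-right direction just unfolds the satisfaction clause (the paper argues by contraposition, you argue directly). The right-to-left direction reduces to states of size at most $n_\psi$ via Proposition~\ref{Entailment-finite-coherence} and picks a mapping $f$ with $f[\s]$ equal to the given state. Your separate treatment of the empty state, which no non-empty label can denote, makes explicit a point the paper leaves implicit: there the countermodel state is automatically non-empty, because it fails to support $\psi$.
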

\begin{proof} 
$(\implies):$ By contraposition. Assume that $\s:\Phi\To\s:\psi$ is not valid. Then, there is some model $M$, mapping $f$ and assignment $g$ such that $M,f\nVdash_g\s:\Phi\To\s:\psi$. Therefore, $M,f[\s]\Vdash_g\phi$ for all $\phi\in\Phi$ and $M,f[\s]\nVdash_g\psi$. It follows that $\Phi\nmodels\psi$.

%
%
%
%

$(\impliedby):$ By contraposition. Assume that $\Phi\nmodels\psi$. Then, by Proposition \ref{Entailment-finite-coherence}, $\Phi\nmodels_{n_\psi}\psi$, i.e., there exists a model $M$
, a state $s$ with $\# s\leq n_\psi$ and an assignment $g$ such that $M,s\models_g\phi$ for all $\phi\in\Phi$ and $M,s\nmodels_g\psi$. Now, let $f$ be any mapping such that $f[\s]=s$. Such a mapping exists, since $\# s\leq n_\psi\leq \#\s$. Clearly, $M,f\Vdash_g\s:\phi$ for all $\phi\in\Phi$ and $M,f\nVdash_g\s:\psi$, which implies that $M,f\nVdash_g\s:\Phi\To\s:\psi$.
\end{proof}

\noindent
Given this connection, we say that a derivation in our proof system is a \emph{proof} of an entailment $\Phi\models\psi$ in case its conclusion is a sequent of the form $\s:\phi_1,\dots,\s:\phi_n\To\s:\psi$ for some formulas $\phi_1,\dots,\phi_n\in\Phi$ and some label $\s$ with $\#\s\ge n_\psi$.


\subsection{Illustration}

\noindent
To illustrate the calculus, we provide a proof of  the \inqwqml{} entailment $\Box\forall x{?}Px\models 
	\Box{?}\forall xPx$. Since the coherence estimate for the conclusion is $n_{\Box{?}\forall xPx}=1$, it suffices to give a proof of a sequent of the form $\{\w\}:\Box\forall x{?Px}\To \{\w\}:\Box{?\forall x Px}$, involving a label of size 1. 
	
	To ease notation, we display labels as sequences of indices rather than sets; for instance, instead of $\{1\}:\phi$ we write simply $1:\phi$. For simplicity, we make use of weakening rules, whose admissibility will be proved in Section \ref{subsection-structural}, and we occasionally apply multiple rules in one step, when no ambiguity arises. We also derive two branches, $\bigstar_2$ and $\bigstar_3$, separately, due to space constraints. 

	\begin{prooftree}
		\AxiomC{}
		\RightLabel{(\textsf{id})}
		\UnaryInfC{$23:Py,\ 23:\forall xPx,\ 23:\forall x{?}Px\To 23:Py,\ 23:\bot$}
		\RightLabel{($\forall{\To}$)}
		\UnaryInfC{$23:\forall xPx,\ 23:\forall x{?}Px\To 23:Py,\ 23:\bot$}
		\AxiomC{\vdots}
		\noLine
		\UnaryInfC{($\bigstar_2$)}
		\AxiomC{\vdots}
		\noLine
		\UnaryInfC{($\bigstar_3$)}
		\RightLabel{(${\To\to}$)}
		\TrinaryInfC{$23:\forall x{?}Px\To 23:\lnot \forall xPx,\ 23:Py$}
		\RightLabel{(${\To}\forall$)}
		\UnaryInfC{$23:\forall x{?}Px\To 23:\forall xPx,\ 23:\lnot \forall xPx$}
		\RightLabel{(${\To}\lori$)}
		\UnaryInfC{$23:\forall x{?}Px\To 23:{{?}\forall xPx}$}
		\RightLabel{($\w{\To}$)}
		\UnaryInfC{$23:\forall x{?}Px,\ 1\R 23,\ 1:\Box\forall x{?}Px\To 23:{{?}\forall xPx}$}
		\RightLabel{($\Box{\To}$)}
		\UnaryInfC{$1\R2,\ 1\R3,\ 1:\Box\forall x{?}Px\To 23:{?}\forall xPx$}
		\RightLabel{(${\To}\Box$)}
		\UnaryInfC{$1:\Box\forall x{?}Px\To 1:\Box{?}\forall xPx$}
	\end{prooftree}
	\hspace*{0cm}\\
The steps of this first half of the derivation are straightforward, and no choices need to be made to perform them. The only exception is the final application of ($\forall{\To}$) in the leftmost branch, where using the variable $y$ is the only reasonable choice to obtain an ($\textsf{id}$) initial sequent. The application of rule $({\To\to})$ generates three branches, one for each  non-empty subset of the label $23$. The branch corresponding to the subset $23$ is displayed in the proof; the branch $\bigstar_2$ corresponding to the subset $2$ is shown below; finally, the branch $\bigstar_3$ is analogous to $\bigstar_2$, but with the label 3 playing the role of the label 2.

	\begin{prooftree}
	\AxiomC{}
	\RightLabel{(\textsf{id})}
	\UnaryInfC{$23:Py, \dots\To 23:Py$}
	\AxiomC{}
	\RightLabel{(\textsf{id})}
	\UnaryInfC{$\dots,\ 2:Py\To \dots,\ 2:Py$}
	\AxiomC{}
	\RightLabel{($\bot{\To}$)}
	\UnaryInfC{$2:\bot,\dots\To \dots$}
	\RightLabel{(${\to\To}$)}
	\BinaryInfC{$23:\lnot Py,\ 2:Py\To 23:Py$}
	\RightLabel{($\lori{\To}$)}
	\BinaryInfC{$2:Py,\ 23:{{?}Py}\To 23:Py$}
	\RightLabel{($\w{\To}$),(${\To}\w$)}
	\UnaryInfC{$2:Py,\ 23:{?}Py,\ 2:\forall xPx,\ 23:\forall x{?}Px\To 23:Py,\ 2:\bot$}
	\RightLabel{($\forall{\To}$),($\forall{\To}$)}
	\UnaryInfC{$2:\forall xPx,\ 23:\forall x{?}Px\To 23:Py,\ 2:\bot$}
	\noLine
	\UnaryInfC{$\underbrace{\hspace*{.375\textwidth}
		}_{\bigstar_2}$}
\end{prooftree}\hspace*{0cm}\\
Semantically, providing a derivation of branch $\bigstar_2$ corresponds to showing that, if the truth value of $Py$ is uniform in the state $23$ (which we know from $23:\forall x{?}Px$) and the world $2$ makes $Py$ true (from $2:\forall x Px$), then we can show that $Py$ is true across the whole state $23$. Therefore, we proceed as follows: we apply ($\forall{\To}$) twice and instantiate it using the variable $y$. Then, in the right branch generated by ($\lori{\To}$), we apply (${\to}{\To}$) picking $2$ as our choice of subset for the label $23$. This allows us to use the information that $Py$ is true at world $2$ to conclude the derivation.

\subsection{Soundness.} 

\noindent
The soundness of our system amounts to the claim that if a sequent is derivable, it is valid.

\begin{proposition}[Soundness]\label{soundness} For any sequent $\Gamma\To\Delta$ we have: 
 $\vdash \Gamma\To\Delta$ implies $\models \Gamma\To\Delta$.
 
\end{proposition}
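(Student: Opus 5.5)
We argue by induction on the height of derivations. We show that every initial sequent is valid and that every rule of \IWMC{} preserves validity: if all premisses are valid, so is the conclusion. Equivalently, we argue contrapositively. We fix $M$, $g$, $f$ falsifying the conclusion, so that $M,f\Vdash_g\Gamma$ and no member of $\Delta$ is satisfied, and we produce a premiss falsified by the same $M$, $g$ and $f$, or by a suitably modified $f$ or $g$.

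\emph{Initial sequents.} For $(\textsf{id})$, if $f[\s]\models_g P(\bar x)$ and $\t\subseteq\s$, then $f[\t]\subseteq f[\s]$, so persistency gives $f[\t]\models_g P(\bar x)$. For $(\bot{\To})$, $f[\s]$ is non-empty because labels are non-empty, so $f[\s]\not\models_g\bot$ and the antecedent is never satisfied.

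\emph{Propositional and quantifier rules.} These are routine. For $({\To}\textsf{at})$, atoms are truth-conditional and $f[\s]=\bigcup_{k\in\s}f[\{k\}]$. For $({\To}\to)$: if $f[\s]\not\models_g\phi\to\psi$, there is a substate $u\subseteq f[\s]$ with $u\models_g\phi$ and $u\not\models_g\psi$. By persistency and the empty state property, $u\neq\emptyset$. Choose a minimal-free witness: take $\t=\{k\in\s\mid f(k)\in u\}$. Then $f[\t]=u$, so the premiss indexed by $\t$ fails under the same $f$. For $({\to}{\To})$, persistency gives $f[\t]\models_g\phi\to\psi$, which suffices. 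For $({\To}\forall)$, the freshness of $z$ lets us modify $g$ at $z$ to the witnessing individual without affecting the conclusion. For $(\forall{\To})$, the substitution lemma $M,s\models_g\phi[y/x]\iff M,s\models_{g[x\mapsto g(y)]}\phi$ is standard.

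\emph{Modal rules.} For $(\Box{\To})$: if $f[\s]\models_g\Box\phi$, $\w\in\s$ and $f[\t]\subseteq R[f(\w)]$, then $R[f(\w)]\models_g\phi$. Persistency gives $f[\t]\models_g\phi$.

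The main step is $({\To}\Box)$. Suppose $f[\s]\not\models_g\Box\phi$. Then there is $\w\in\s$ with $R[f(\w)]\not\models_g\phi$. By Proposition~\ref{Finite-coherence}, $\phi$ is $n_\phi$-coherent, so some $u\subseteq R[f(\w)]$ with $\#u\le n_\phi$ satisfies $u\not\models_g\phi$. Here $u\neq\emptyset$ by the empty state property.

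The fresh label $\t$ has exactly $n_\phi$ elements and $\#u\le n_\phi$, so there is a surjection $h:\t\to u$. Define $f'$ to agree with $f$ outside $\t$ and with $h$ on $\t$. The side condition $\t\cap(\W_{\Gamma,\Delta}\cup\s)=\emptyset$ ensures that $f'$ agrees with $f$ on every index occurring in $\Gamma$, $\Delta$ and $\w$. Hence $f'$ still satisfies $\Gamma$ and falsifies every member of $\Delta$. Moreover $f'[\t]=u\subseteq R[f'(\w)]$, so $M,f'\Vdash_g\w\R\t$, while $f'[\t]\not\models_g\phi$. Thus the premiss for this $\w$ is falsified.

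The care needed here is in checking that the freshness condition really isolates $\t$, and in handling $\#u<n_\phi$ via a non-injective mapping. This is what makes the rule sound despite fixing $\#\t=n_\phi$ exactly.
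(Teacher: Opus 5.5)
Your proof is correct and follows the paper's argument. You show rule-by-rule preservation of validity, and you handle the $({\To}\Box)$ case exactly as the paper does: $n_\phi$-coherence yields a small non-supporting substate $u$, and a modified mapping $f'$ sends the fresh label $\t$ onto $u$, with the freshness condition ensuring that $f'$ agrees with $f$ on $\Gamma$, $\Delta$ and $\w$. You also spell out the non-modal cases that the paper delegates to Litak and Sano, and you make explicit that $u\neq\emptyset$, which is needed for $f'[\t]=u$ and which the paper leaves implicit.
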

\begin{proof}
	We prove soundness by induction on the structure of a derivation of $\Gamma\To\Delta$. Therefore, it suffices to show that validity is preserved by the rules. Below, we give the proof steps for the cases of ($\Box{\To}$) and (${\To}\Box$). The proof of the remaining cases remains unchanged from \cite{LitakSano:25}.
	
	\medskip\noindent
	\underline{$(\Box{\To})$:} Assume that, for some $\w\in \s$, the sequent $(\Gamma,\s:\Box\phi,\w\R\t,\t:\phi\To\Delta)$ is satisfied by all models, mappings and assignments. Let $M=\langle W,R,D,I\rangle$ be a model, $g$ an assignment and $f$ a mapping over $M$. Now, suppose that $M,f\Vdash_g \Gamma,\s:\Box\phi,\w\R\t$. Then, in particular, we have that $M,f[\s]\models_g\Box\phi$ and $f[\t]\subseteq R[f(\w)]$. By the semantics of $\Box$ and by persistency, this gives us $M,f[\t]\models_g\phi$, or, equivalently, $M,f\Vdash_g \t:\phi$. Therefore, we have $M,f\Vdash_g\Gamma,s:\Box\phi,\w\R\t,\t:\phi$. By our initial assumption, this implies that $M,f\Vdash_g\delta$ for some $\delta\in\Delta$. This shows that the sequent  $(\Gamma,\s:\Box\phi,\w\R\t\To\Delta)$ is valid.
	
	\medskip\noindent
	\underline{$({\To}\Box)$:} Assume that there is a label $\t$ with $\# \t= n_\phi$, $\t\cap (W_{\Gamma,\Delta}\cup\s)=\emptyset$ and such that, for all $\w\in\s$, for any model, mapping and assignment, the sequent $(\Gamma,\w\R\t\To\Delta,\t:\phi)$ is satisfied. Then, let $M=\langle W,R,D,I\rangle$ be a model, $g$ an assignment and $f$ a mapping over $M$, and assume that $M,f\Vdash_g \Gamma$, and that for all $\delta\in\Delta$, $M,f\nVdash_g\delta$. We prove by contradiction that $M,f\Vdash_g \s:\Box\phi$. Suppose otherwise. We have:
	\begin{align*}
		&\text{$M,f\nVdash_g\s:\Box\phi$},\\
		\text{$\iff$ }&\text{$M,f[\s]\nmodels_g\Box\phi$},\\
		\text{$\iff$ }&\text{for some $w\in f[\s]$, $M,R[w]\nmodels_g\phi$}\\
		\text{$\iff$ }&\text{for some $\w\in \s$, $M,R[f(\w)]\nmodels_g\phi$}\\
		\text{$\iff$ }&\text{for some $\w\in \s$ and some $s'\subseteq R[f(\w)]$ s.t. $\#s'\leq n_\phi:\ M,s'\nmodels_g\phi$.} & \text{(by Prop.\ \ref{Finite-coherence})}
	\end{align*} 
	Now, we define a new mapping $f'$ such that $f'[\t]=s'$ and $f'$ coincides with $f$ on all indices not in \t. Note that we can define such an $f'$ because $\# \t=n_\phi\geq \# s'$. For some $\w\in\s$ we have $M,f'\Vdash_g \w\R\t$ and $M,f'\nVdash_g\t:\phi$, since $f'[\t]=s'$, $s'\subseteq R[f(\w)]$ and $M,s'\nmodels_g\phi$. Moreover, $M,f'\Vdash_g\Gamma$ and $M,f'\nVdash_g\delta$ for all $\delta\in\Delta$, since $\t\cap (W_{\Gamma,\Delta}\cup s)=\emptyset$, so $f'$ agrees with $f$ on all indices occurring in $\Gamma$ or $\Delta$.
	Therefore, for some $\w\in\s$, the sequent $(\Gamma,\w\R\t\To\Delta,\t:\phi)$ is not satisfied by $M,f',g$, contradicting the assumption that this sequent is valid.
\end{proof}

\noindent
We have thus proved that our proof system \IWMC{} is sound for \inqwqml{}. Before proving that it is also complete, we study the proof-theoretic properties of the calculus.

\subsection{Structural properties}\label{subsection-structural}
In this section, we show the admissibility of weakening, contraction and cut rules for \IWMC{}, as well as the invertibility of all rules in the system. Besides their intrinsic interest, these results will play a role below in the completeness proof.

We will use the notion of \textit{height} of a derivation, defined as the length of one of its longest branches, and we will write $\vdash_n\Gamma\To\Delta$ when there is a derivation of $\Gamma\To\Delta$ of height at most $n$. We will say that a rule is \textit{admissible} if whenever its premisses are derivable, then its conclusion is also derivable. A rule is said to be \textit{height-preserving admissible} if whenever all its premisses have a derivation of height at most $n$, then its conclusion also has a derivation of height at most $n$. By a rule being \textit{invertible}, we mean that all rules having its conclusion as premiss and one of its premisses as conclusion are admissible.  
\begin{definition}[Label substitution]
Given a multiset $\Gamma$ of labelled formulas and relational atoms  and indices $\w,\vv\in\N$, we write $\Gamma[\vv/\w]$ to indicate the multiset of labelled formulas obtained by replacing, in each element of $\Gamma$, every occurrence of $\w$ with an occurrence of $\vv$. Similarly, given a label $\s=\{\w_1,\dots,\w_n\}$ with $\w_1<\w_2<\dots<\w_n$, and a sequence $\u=\vv_1,\dots,\vv_n$ of (not necessarily distinct) indices, whose length matches the size of $\s$,
we write $\Gamma[\u/\s]$ to mean $\Gamma[\vv_1/\w_1,\dots,\vv_n/\w_n]$, where all substitutions are performed simultaneously.
\end{definition}
\begin{lemma}[Relabeling lemma]\label{relabeling-lemma}
If $\vdash_n\Gamma\To\Delta$, then $\vdash_n\Gamma[\u/\s]\To\Delta[\u/\s]$ for any label $\s$ and sequence of natural numbers $\u$ s.t. $\textsf{length}(\u)=\# \s$. 
\end{lemma}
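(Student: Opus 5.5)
Proceed by induction on the height $n$ of the derivation of $\Gamma\To\Delta$, proving the statement simultaneously for all labels $\s$ and all sequences $\u$. It is convenient to write $\sigma$ for the map on indices sending $\w_i$ to $\vv_i$ and fixing every other index. Then $\Gamma[\u/\s]$ applies $\sigma$ pointwise to labels and relational atoms, so a label $\t$ becomes $\sigma[\t]$. Two facts about $\sigma$ are used repeatedly: $\t\subseteq\t'$ implies $\sigma[\t]\subseteq\sigma[\t']$, and $\sigma[\t]$ is again a non-empty finite set, because labels are non-empty. Note also that $\sigma$ may collapse indices, so $\#\sigma[\t]$ can be smaller than $\#\t$.

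In the base case, an instance of $(\textsf{id})$ with $\s'\supseteq\t'$ becomes $\sigma[\s']:P(\bar x)\To\sigma[\t']:P(\bar x)$ with $\sigma[\s']\supseteq\sigma[\t']$, which is again an instance of $(\textsf{id})$. An instance of $(\bot{\To})$ is clearly preserved. In the inductive step we look at the last rule and apply the induction hypothesis to its premisses, which have height at most $n-1$, and then reapply the same rule. The rules $(\land)$, $(\lori)$, $(\forall{\To})$ and $(\Box{\To})$ are immediate. For $(\Box{\To})$, note that $\w\in\s'$ gives $\sigma(\w)\in\sigma[\s']$, and that $\w\R\t$ becomes $\sigma(\w)\R\sigma[\t]$. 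For $({\to}{\To})$, the side condition $\t\subseteq\s'$ is preserved by monotonicity. For $({\To}\forall)$, relabeling does not touch variables, so the eigenvariable condition still holds.

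The rules whose premisses are indexed by the elements or subsets of the principal label need more care. For $({\To}\textsf{at})$, the premisses after relabeling are indexed by $k\in\s'$, whereas the rule applied to $\sigma[\s']$ requires the premisses indexed by $k'\in\sigma[\s']$. Each such $k'$ equals $\sigma(k)$ for some $k\in\s'$, so we select from the relabeled premisses the ones we need, possibly with repetitions. For $({\To}{\to})$ we likewise need, for every $\t''\subseteq\sigma[\s']$, a derivable premiss $\t'':\phi,\Gamma^\sigma\To\Delta^\sigma,\t'':\psi$. We take $\t=\sigma^{-1}[\t'']\cap\s'$, which satisfies $\sigma[\t]=\t''$, and use the relabeled premiss for $\t$. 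The cases of $({\To}\Box)$ indexed by $\w\in\s'$ are handled in the same way as $({\To}\textsf{at})$.

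The main obstacle is $({\To}\Box)$, for two reasons. First, the eigenlabel $\t$ must satisfy $\#\t=n_\phi$ and be fresh. Relabeling may map some indices of $\t$ onto indices already present in the sequent, or collapse them together, and either would break the side condition. The standard fix is a preliminary step. Before applying the induction hypothesis, we rename $\t$ to a label $\t'$ of the same size consisting of indices that are fresh for $\Gamma$, $\Delta$, $\s'$ and also for $\s$ and $\u$. This renaming is injective and avoids all other indices, so it can be done by the induction hypothesis itself, because the premisses have smaller height. Since $\t'$ is disjoint from $\s$, the substitution $\sigma$ fixes $\t'$. So after relabeling the premisses have the form $\sigma(\w)\R\t',\Gamma^\sigma\To\Delta^\sigma,\t':\phi$, with $\t'$ still fresh and of size $n_\phi$, and $({\To}\Box)$ applies. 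Second, the premisses are required for each $\w'\in\sigma[\s']$, and these are obtained by choosing preimages as above. Because the induction hypothesis is applied twice on the same premisses, renaming first and then applying $\sigma$, and both applications preserve height, the overall height bound $n$ is preserved.
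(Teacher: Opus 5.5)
Your proof is correct and follows essentially the paper's route. Both argue by induction on height, and every case is routine except $({\To}\Box)$. There the eigenlabel is first renamed to a fresh label using the induction hypothesis, and the substitution $[\u/\s]$ is then applied using the induction hypothesis a second time.

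Your version is somewhat more careful than the paper's in two respects. You always rename, choosing $\t'$ disjoint from $\s$ and $\u$ as well as from the sequent. You also handle index collapse explicitly for $({\To}\textsf{at})$, for $({\To}{\to})$, and for the premisses of $({\To}\Box)$ indexed by the principal label, all of which the paper treats as immediate.
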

\begin{proof}
By induction on the height $n$ of the derivation. If $n=0$, the sequent is either an instance of $(\textsf{id})$ or $(\bot{\To})$ and the result is trivial. Assume then that the claim holds for all $k< n+1$. Then we can proceed by cases on the last rule applied. The cases of $\textsf{at},\land,\lori,\to,\forall$-rules all follow immediately by induction (note that for $(\to{\To})$, the condition on the labels remains satisfied after any substitution), and so does the $(\Box{\To})$ case. This leaves out only the $({\To}\Box)$ case.

Assume $\vdash_{n+1}\Gamma\;\To\;\Delta,\s:\Box\phi$ and that for some $\t$ s.t. $\# \t= n_\phi$ and $\t\cap (W_{\Gamma,\Delta}\cup \s)=\emptyset$: for all $\w\in \s$, $\vdash_n \Gamma, \w\R\t\;\To\;\Delta, \t:\phi$. 

If $\u\cap \t=\emptyset$, we apply the inductive hypothesis to get that $\vdash_n\Gamma[\u/\s], \w\R\t[\u/\s]\;\To\;\Delta[\u/\s], \t:\phi[\u/\s]$ for all $\w\in \s$. We conclude by applying $({\To}\Box)$, which is possible because $\t\cap\s=\emptyset$ and, therefore, $\t[\u/\s]=\t$. As $\u\cap\t=\emptyset$, $\t[\u/\s]$ still satisfies the side conditions. 

If $\u\cap\t\neq \emptyset$, we first apply the inductive hypothesis to replace $\t$ with a label $\t'$ of equal cardinality, but disjoint from both $\u$, $\t$ and $W_{\Gamma,\Delta}\cup\s$, and we get: for all $\w\in\s$, $\vdash_n \Gamma, \w\R\t'\;\To\;\Delta, \t':\phi$. Since $\t'$ satisfies the side conditions of $({\To}\Box)$, we then apply the inductive hypothesis to perform the label substitution $[\u/\s]$, and we conclude as in the previous case.
\end{proof}
\begin{proposition}\label{prop-structural-properties}
Let $\sigma$ stand for either a labelled formula $\u:\phi$ or a relational atom $\w\R\vv$.
\begin{enumerate}
	\item The following weakening rules are height-preserving admissible in \IWMC{}:
	\begin{center}$\begin{array}{cc}
			\infer[\rulename{{\To}{\w}}]
			{\Gamma \To \Delta,\u:\phi}
			{\Gamma \To \Delta}
			&
			\infer[\rulename{{\w}{\To}}]
			{\Gamma, \sigma \To \Delta}
			{\Gamma \To \Delta}
		\end{array}$
	\end{center}
	\item All rules of \IWMC{} are height-preserving invertible.
	\item The following contraction rules are height-preserving admissible in \IWMC{}:
	\begin{center}$\begin{array}{cc}
			\infer[\rulename{{\To}{\textsf{c}}}]
			{\Gamma \To \Delta,\u:\phi}
			{\Gamma \To \Delta,\u:\phi,\u:\phi}
			&
			\infer[\rulename{{\textsf{c}}{\To}}]
			{\Gamma, \sigma \To \Delta}
			{\Gamma,\sigma,\sigma\To \Delta}
		\end{array}$
	\end{center}
\end{enumerate}
\end{proposition}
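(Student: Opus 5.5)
The three claims are proved in the stated order, each by induction on the height $n$ of the given derivation, with the Relabeling Lemma (Lemma \ref{relabeling-lemma}) handling freshness conditions.

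\textbf{Weakening.} For (1), if $n=0$ the conclusion is again an instance of $(\textsf{id})$ or $(\bot{\To})$, since these axioms carry arbitrary contexts $\Gamma,\Delta$. In the inductive step we weaken the premisses by the inductive hypothesis and reapply the last rule. Two rules carry freshness conditions: $({\To}\forall)$ with its eigenvariable $z$, and $({\To}\Box)$ with its fresh label $\t$, $\t\cap(\W_{\Gamma,\Delta}\cup\s)=\emptyset$. The added $\sigma$ may clash with these. For variables, we use the standard height-preserving admissibility of variable substitution; its proof is a routine induction like that of Lemma \ref{relabeling-lemma}. For labels, we first apply Lemma \ref{relabeling-lemma} to each premiss $\Gamma,\w\R\t\To\Delta,\t:\phi$, replacing $\t$ by a label $\t'$ of the same size that is disjoint from everything, including the indices of $\sigma$. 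This leaves $\Gamma,\Delta$ unchanged, because $\t$ does not occur there, and preserves height. We then weaken by the inductive hypothesis and reapply $({\To}\Box)$ with $\t'$.

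\textbf{Invertibility.} For (2), several rules are trivially invertible by weakening, because their premisses contain the conclusion's context: $(\to{\To})$, $(\forall{\To})$, $(\Box{\To})$, and the left premiss of $(\to{\To})$. For the remaining rules we show, by induction on $n$, that $\vdash_n$ of the conclusion gives $\vdash_n$ of each premiss.
\begin{itemize}
\item If the principal formula was not principal in the last step, we apply the inductive hypothesis to that step's premisses and reapply the step, again renaming eigenvariables or fresh labels via Lemma \ref{relabeling-lemma} where needed.
\item If it was principal, the premisses are already available.
\item For $({\To}\text{\textsf{at}})$ in the axiom case we need care. If $\s:P(\bar x)$ is the principal formula of $(\textsf{id})$, with $\s'\supseteq\s$ on the left, then for each $k\in\s$ we have $\s'\supseteq\{k\}$, so each premiss is again an axiom.
\item Inversion of $({\To}\to)$ at an $(\textsf{id})$ axiom cannot arise, since $(\textsf{id})$ is atomic.
\item For $({\To}\Box)$, the premiss is obtained for an arbitrary fresh $\t$ by relabeling whatever fresh label was used, if any. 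If $\s:\Box\phi$ was not principal, we take the inductive hypothesis on the premisses, after making sure the chosen $\t$ is also fresh with respect to the labels introduced there.
\end{itemize}

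\textbf{Contraction.} For (3), we argue by simultaneous induction on height. If the contracted item is not principal in the last rule, we apply the inductive hypothesis to the premisses and reapply the rule. If it is principal, we use the height-preserving invertibility from (2) on the other copy, contract the resulting components by the inductive hypothesis, and reapply the rule. For example, for $\s:\phi\lori\psi$ on the right, inversion yields $\s:\phi,\s:\psi,\s:\phi,\s:\psi$, which we contract. For $({\To}\to)$ and $({\To}\Box)$ the same pattern works; for $({\To}\Box)$ we first rename the second copy's fresh label to match the first's via Lemma \ref{relabeling-lemma}. Rules that keep their principal formula in the premiss, namely $(\to{\To})$, $(\forall{\To})$ and $(\Box{\To})$, are immediate: we contract in the premiss. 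For $(\Box{\To})$, duplicated relational atoms are handled the same way, since the rule retains $\w\R\t$.

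\textbf{Main obstacle.} I expect the main obstacle to be the bookkeeping of fresh labels in $({\To}\Box)$ throughout all three parts. In particular, when inverting or contracting, the label $\t$ must have exactly cardinality $n_\phi$ and be disjoint from $\s$ and the whole universe. This is resolved by systematically invoking Lemma \ref{relabeling-lemma} with an injective renaming to fresh indices, which preserves height.
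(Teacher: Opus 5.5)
Your proposal is correct and follows essentially the same route as the paper. Each part is proved by induction on height, in the order weakening, invertibility, contraction, and Lemma~\ref{relabeling-lemma} restores freshness of the $({\To}\Box)$ label. In the principal $({\To}\Box)$ case of right contraction, you invert the second copy, rename its fresh label onto the first, and contract; as in the paper, take the instance $\vv=\w$ so that the duplicated atoms $\w\R\t$ can actually be contracted. The only difference is presentational: you prove invertibility of $({\To}\Box)$ for an arbitrary fresh label, which removes the paper's step of unifying the different labels $\t_{\vv}$ and $\t_{k}$ obtained from several premisses.
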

\begin{proof}[Proof of $1$]
By induction on the height of the derivation of the premiss $\Gamma\To\Delta$. The case of $n=0$ is standard for both weakening rules. For the inductive case, assume $\vdash_{n+1}\Gamma\To\Delta$. The proof proceeds by cases on the last applied rule in the derivation. We only detail the cases involving the modal rules for $({\To}\w)$: the modal cases for $(\w{\To})$ are analogous, and all non-modal cases are standard.
\begin{itemize}
	\item\underline{$(\Box{\To}):$} If the last applied rule in the derivation is $(\Box{\To})$, then $\Gamma=\Gamma',\s:\Box\psi,\w\R\t$ for some $\w\in \s$ and the premiss is $\Gamma',\s:\Box\psi,\w\R\t,\t:\psi\To\Delta$. Therefore, $\vdash_n \Gamma',\s:\Box\psi,\w\R\t,\t:\psi\To\Delta$. By induction, $\vdash_n\Gamma',\s:\Box\psi,\w\R\t,\t:\psi\To\Delta,\u:\phi$. By applying $(\Box{\To})$, we then find a derivation of height $n+1$ of $\Gamma\To\Delta,\u:\phi$.
	\item\underline{$({\To}\Box):$} If the last applied rule in the derivation is $({\To}\Box)$, then $\Delta=\Delta',\s:\Box\psi$, and, for some $\t $ s.t. $\# \t= n_\psi$ and $\t\cap W_{\Gamma,\Delta}=\emptyset$: for all $\w\in\s$, $\vdash_n\Gamma,\w\R\t\To\Delta',\t:\psi$. Now, we distinguish two cases. If $\t\cap \u=\emptyset$, we proceed as above. If $\t\cap \u\neq\emptyset$, thanks to Lemma \ref{relabeling-lemma} and to the fact that $\u$ and $W_{\Gamma,\Delta}$ are finite, we can replace the label $\t$ with a label $\t'$ s.t. $\t'\cap (\t\cup W_{\Gamma,\Delta}\cup \u)=\emptyset$ and we have, for all $\w\in\s$, $\vdash_n\Gamma,\w\R\t'\To\Delta',\t':\psi$. By the inductive hypothesis, for all $\w\in\s$, $\vdash_n\Gamma,\w\R\t'\To\Delta',\u:\phi,\t':\psi$. Since $\t'$ still satisfies the side conditions of $({\To}\Box)$, we have $\vdash_{n+1}\Gamma\To\Delta,\u:\phi$.\qedhere
\end{itemize}
\end{proof}
\begin{proof}[Proof of $2$]
To show that all rules are invertible, we structure our proof following standard arguments from \cite{Negri:01}. We proceed one rule at a time, by induction on the height of the derivation. The base step of the induction is straightforward and standard for both the first-order rules and the modal rules. The inductive step is divided into four cases, depending on (i) whether the last rule applied is the one being inverted and (ii) whether the principal formula of the last applied rule is principal in the inverted rule. Below, we only detail the steps involving the modal rules, as the rest remain unchanged from \cite{LitakSano:25}. 

\medskip\noindent
\underline{Invertibility of the non-modal rules.} We need to add two clauses to the case where the last rule applied is not the one being inverted, and it is a modal rule. If the last rule is $(\Box{\To})$, the proof is straightforward. If the last rule is $({\To}\Box)$, the argument is the same, but it requires observing that the natural numbers occurring in labels in the premisses are always a subset of those found in the conclusion, so the side conditions on $\t$ remain satisfied. 

\medskip\noindent
\underline{Invertibility of $(\Box{\To})$}. Follows immediately from the height-preserving admissibility of weakening.

\medskip\noindent
\underline{Invertibility of $({\To}\Box)$.} Assume $\vdash_{n+1}\Gamma\To\Delta,\s:\Box\phi$. We distinguish several cases:
\begin{itemize}
	\item Last applied rule is $({\To}\Box)$ and $\s:\Box\phi$ is principal: immediate.
	\item Last applied rule is $({\To}\Box)$ and $\s:\Box\phi$ is not principal: then, for some $\u$ and $\psi$, letting $\Delta=\Delta',\u:\Box\psi$ we have $\vdash_{n+1}\Gamma\To\Delta',\u:\Box\psi,\s:\Box\phi$ and, for some $\t$ s.t. $\#\t=n_\psi$ and $\t\cap(W_{\Gamma,\Delta'}\cup\s\cup\u)=\emptyset$: for all $\vv\in\u$, $\vdash_{n}\Gamma,\vv\R\t\To\Delta',\t:\psi,\s:\Box\phi$. 
		
		By inductive hypothesis, for each $\vv\in\u$ there is $\t_\vv$ with $\#\t_\vv=n_\phi$ and $\t_\vv\cap(W_{\Gamma,\Delta'}\cup\s\cup\t\cup\{\vv\})=\emptyset$ and such that: for all $\w\in\s$, $\vdash_{n}\Gamma,\vv\R\t,\w\R\t_\vv\To\Delta',\t:\psi,\t_\vv:\phi$. We then apply Lemma \ref{relabeling-lemma} to substitute 
		all the labels $\t_\vv$ with a single label $\t_\s$, taken to be disjoint from any labels included so far. As a result, we can reframe the set of derivable sequents as follows. For any $\w\in\s$, for all $\vv\in\u$, $\vdash_{n}\Gamma,\vv\R\t,\w\R\t_\s\To\Delta',\t:\psi,\t_\s:\phi$. For each $\w\in\s$, we can consider the corresponding set of sequents, to which, since $\t$ still satisfies the side conditions, we can then apply rule $({\To}\Box)$ (with $\vv\R\t$ and $\t:\psi$ principal), and conclude $\vdash_{n+1}\Gamma,\w\R\t_\s\To\Delta',\u:\Box\psi,\t_\s:\phi$ for all $\w\in\s$.
	\item Last applied rule is not $({\To}\Box)$: we only show how to proceed for the rule $({\To}\forall)$, which has side conditions, and for the case of $({\To} \textsf{at})$, which requires additional observations. The case of $({\To}\to)$ is analogous to that of $({\To}\textsf{at})$.
	\begin{itemize}
		\item Last applied rule is $({\To}\forall)$: then, letting $\Delta=\Delta',\u:\forall x\psi$, we have $\vdash_{n+1}\Gamma\To\Delta',\s:\Box\phi,\u:\forall x\psi$ and, for some $z$ not occurring in the conclusion, $\vdash_{n}\Gamma\To\Delta',\s:\Box\phi,\u:\psi[z/x]$. By inductive hypothesis, we get, for some $\t$ s.t. $\# \t= n_\phi$ and $\t\cap(W_{\Gamma,\Delta'}\cup\s\cup \u)=\emptyset$: for all $\w\in\s$, $\vdash_{n}\Gamma,\w\R\t\To\Delta',\t:\phi,\u:\psi[z/x]$. We apply $({\To}\forall)$ (since $z$ did not occur in the conclusion and, therefore, in $\phi$), and conclude that, for some $\t$ s.t. $\# \t= n_\phi$ and $\t\cap(W_{\Gamma,\Delta'}\cup\s\cup \u)=\emptyset$: for all $\w\in\s$, $\vdash_{n+1}\Gamma,\w\R\t\To\Delta',\t:\phi,\u:\forall x\psi$.
		\item Last applied rule is $({\To}\textsf{at})$: then,  letting $\Delta=\Delta',\u:P(\bar x)$, we have $\vdash_{n+1}\Gamma\To\Delta',\s:\Box\phi,\u:P(\bar x)$ and, for all $k\in \u$, $\vdash_{n}\Gamma\To\Delta',\s:\Box\phi,k:P(\bar x)$. Then, by inductive hypothesis, we get that, for all $k\in \u$, there is some $\t_k$ such that $\# \t_k= n_\phi$, $\t_k\cap(W_{\Gamma,\Delta'}\cup\s\cup\{k\})=\emptyset$ and such that for all $\w\in\s$, $\vdash_{n}\Gamma,\w\R\t_k\To\Delta',\t_k:\phi,k:P(\bar x)$. Then, for each $k\in \u$, we apply Lemma \ref{relabeling-lemma} to each sequent in the corresponding set and use it to replace $\t_k$ with $\t'$, a label of the same cardinality which is disjoint both from $W_{\Gamma,\Delta}$, $\s$, and $\u$ and from $\bigcup_{k\in \u} \t_k$ (such $\t'$ exists, since $\Gamma$ and $\Delta$ are finite). Then, we have that for some $\t'$ s.t. $\# \t'= n_\phi$ and $\t'\cap(W_{\Gamma,\Delta}\cup\s\cup \u)=\emptyset$ 
		for all $k\in \u$, for all $\w\in\s$, $\vdash_{n}\Gamma,\w\R\t'\To\Delta',t':\phi,k:P(\bar x)$. We then apply $({\To}\textsf{at})$ to get that for some $\t'$ s.t. $\# \t'= n_\phi$ and $\t'\cap(W_{\Gamma,\Delta}\cup\s\cup \u)=\emptyset$: for all $\w\in\s$, $\vdash_{n+1}\Gamma,\w\R\t'\To\Delta',t':\phi,\u:P(\bar x)$\qedhere
	\end{itemize} 
\end{itemize}
\end{proof}	 

\begin{proof}[Proof of $3$] By induction on the height of the derivation of the premiss. The initial step of the induction is straightforward.
The inductive step is divided by cases, depending on the last rule applied in the derivation and on whether the contracted formula was principal in the last step. Those cases where the contracted formula is not principal in the last applied rule in the derivation are standard: we apply the inductive hypothesis to each premiss, then apply the last rule to the contracted sequents.

Of the cases where the formula being contracted is principal in the last applied rule, we only spell out those that were not subsumed in the first-order case:

\medskip\noindent
\underline{$({\Box}{\To})$ for $(\textsf{c}{\To})$:}  
There are two possibilities: either we want to contract a double occurrence of $s:\Box\phi$ or one of $\w\R\vv$. Both follow easily, by applying the inductive hypothesis to the premiss, where the duplicate formula is still present.

\medskip\noindent
\underline{$({\To}{\Box})$ for $({\To}\textsf{c})$:}	Assuming contraction holds up to height $n$, suppose that $\vdash_{n+1}\Gamma\To\Delta,\s:\Box\phi,\s:\Box\phi$. Then, for some $\t$ s.t. $\# \t= n_\phi$ and $\t\cap(W_{\Gamma,\Delta}\cup \s)=\emptyset$: for all $\w\in \s$, $\vdash_{n}\Gamma,\w\R\t\To\Delta,\s:\Box\phi,\t:\phi$. By the height-preserving invertibility of $({\To}\Box)$, we get that for the same $\t$ as above, for all $\w\in \s$, there is a $\t_w$ with $\#\t_w=n_\phi$ and $\t_w\cap(W_{\Gamma,\Delta}\cup \s\cup\t)=\emptyset$ and such that: for all $\vv\in\s$, $\vdash_{n}\Gamma,\w\R\t,\vv\R\t_w\To\Delta,\t:\phi,\t_w:\phi$. By Lemma \ref{relabeling-lemma}, since both $\t$ and $\t_\w$ have cardinality $n_\phi$, we can perform the label substitution $[\t/\t_\w]$ and get that for all $\w,\vv\in\s$, $\vdash_{n}\Gamma,\w\R\t,\vv\R\t\To\Delta,\t:\phi,\t:\phi$. In particular, this holds in the case where $\vv=\w$, so we have that, for all $\w\in\s$, $\vdash_{n}\Gamma,\w\R\t,\w\R\t\To\Delta,\t:\phi,\t:\phi$. By inductive hypothesis, we can apply left contraction $\#\t$ times (recall that $\w\R\t=\{\w\R\vv\mid\vv\in\t\}$) and right contraction once to get $\vdash_{n}\Gamma,\w\R\t\To\Delta,\t:\phi$ for all $\w\in\s$. The conclusion then follows by applying $({\To}\Box)$.
\end{proof}
\begin{proposition}\label{cut}
The rule of cut is admissible in \IWMC{}:
\begin{center}
	\begin{prooftree}
		\AxiomC{$\Gamma \To \Delta,\s:\phi$}
		\AxiomC{$\s:\phi,\Pi\To\Sigma$}
		\RightLabel{\textnormal{(\textsf{cut})}}
		\BinaryInfC{$\Gamma,\Pi\To\Delta,\Sigma$}
	\end{prooftree}
\end{center}
\end{proposition}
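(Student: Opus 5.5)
We prove admissibility of cut syntactically, following the standard Negri--von Plato strategy. We use a primary induction on the complexity of the cut formula $\phi$, where $\Box\phi$ counts as more complex than $\phi$. Within that, we use a secondary induction on the sum of the heights of the derivations of the two premisses. We may assume both premisses are cut-free, since we eliminate topmost cuts first. The tools are those already established: the Relabeling Lemma \ref{relabeling-lemma}, height-preserving weakening, invertibility and contraction from Proposition \ref{prop-structural-properties}, and the usual substitution of variables for the $\forall$-rules. We distinguish three kinds of cases: (i) one premiss is an initial sequent; (ii) the cut formula is not principal in the last rule of one of the premisses; (iii) the cut formula is principal in both.

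Case (i). If the cut formula is not principal in the initial sequent, the conclusion is itself initial after weakening. If the left premiss is $(\textsf{id})$ with principal $\t:P(\bar x)$ and $\s\supseteq\t$, we cut against a right premiss containing $\t:P(\bar x)$. Here we need an auxiliary fact: replacing $\t:P(\bar x)$ on the left by $\s:P(\bar x)$ with $\s\supseteq\t$ is height-preserving admissible. This is proved by a routine induction; its only non-trivial base case is $(\textsf{id})$, where it follows by transitivity of $\supseteq$. The symmetric $(\textsf{id})$ case is similar. The case $(\bot{\To})$ is immediate.

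Case (ii). We permute the cut upward into the premisses of the last rule, applying the secondary induction hypothesis and then reapplying that rule. The delicate instances are the rules with eigen-conditions, $({\To}\forall)$ and $({\To}\Box)$. For $({\To}\Box)$ the fresh label $\t$ must be disjoint from the indices of the other cut premiss. We first use Lemma \ref{relabeling-lemma} to rename $\t$ to a label $\t'$ disjoint from $W_{\Pi,\Sigma}$; since this is height-preserving, the secondary induction still applies. The rules $(\to{\To})$ and $(\Box{\To})$ repeat their principal formula in the premiss, but this causes no trouble in the non-principal case. For $({\To}\textsf{at})$ and $({\To}{\to})$, which have many premisses, we cut in each premiss separately.

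Case (iii). Here the principal reductions reduce cut complexity. The non-modal cases are as in \cite{LitakSano:25}. For the implication case we cut $\s:\phi\to\psi$, whose left derivation ends in $({\To}{\to})$ and whose right derivation ends in $({\to}{\To})$ with some $\t\subseteq\s$. We first cut on $\s:\phi\to\psi$ at lower height against the premisses of $({\to}{\To})$, which still contain the principal formula. Then we use the $({\To}{\to})$ premiss for $\t$, namely $\t:\phi,\Gamma\To\Delta,\t:\psi$, to cut on $\t:\phi$ and $\t:\psi$, which have smaller complexity. Contraction then removes the duplicated contexts. The new case is $\Box$. The left premiss $\Gamma\To\Delta,\s:\Box\phi$ comes from $\{\vv\R\t,\Gamma\To\Delta,\t:\phi\mid\vv\in\s\}$. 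The right premiss $\s:\Box\phi,\w\R\u,\Pi\To\Sigma$ comes from $\u:\phi,\s:\Box\phi,\w\R\u,\Pi\To\Sigma$ with $\w\in\s$. We proceed in three steps.
\begin{enumerate}
\item We cut $\s:\Box\phi$ between the left conclusion and the right premiss; this is allowed by the secondary induction, since the height is smaller. The result is $\Gamma,\u:\phi,\w\R\u,\Pi\To\Delta,\Sigma$.
\item We transform the left premiss for $\vv=\w$, namely $\w\R\t,\Gamma\To\Delta,\t:\phi$, into a sequent about $\u$. The obstacle is that $\#\u$ is arbitrary, while $\#\t=n_\phi$.
\item We cut on $\u:\phi$, which has lower complexity, and then contract.
\end{enumerate}

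Step 2 is the main obstacle. If $\#\u\le n_\phi$, we use Lemma \ref{relabeling-lemma} with a surjective sequence mapping $\t$ onto $\u$; this is legitimate because $\t$ is fresh, so $\Gamma,\Delta$ are untouched. This yields $\w\R\u,\Gamma\To\Delta,\u:\phi$. If $\#\u>n_\phi$, relabeling alone does not suffice. Our plan is to show that $\u:\phi$ follows from $\{\u':\phi\mid \u'\subseteq\u,\ \#\u'\le n_\phi\}$, i.e.\ a syntactic form of coherence (Proposition \ref{Finite-coherence}). We would prove this as a lemma by induction on $\phi$. Alternatively, we would adapt the argument by first decomposing $\u:\phi$ on the right premiss side. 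If this proves too hard, the fallback is to push the cut on $\u:\phi$ upward through the right derivation until each use of $\u:\phi$ involves labels of size at most $n_\phi$. We expect this coherence-related step to be the crux of the proof.
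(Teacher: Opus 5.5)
Your architecture matches the paper's: induction on the cut formula with a subinduction on the height sum, relabelling the fresh label when permuting past $({\To}\Box)$, and, in the principal $\Box$ case, a cut on $\s:\Box\phi$ at lower height followed by a cut on $\u:\phi$ and contraction. For $\#\u\le n_\phi$ your Step 2 is exactly the paper's relabelling step.

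The gap is the case $\#\u>n_\phi$. There you only announce a lemma ``to be proved by induction on $\phi$'' plus two vaguer fallbacks, so the argument is unfinished at what you yourself call the crux. Your concern is well founded, and the paper's argument does not cover this case either. The paper replaces the fresh $n_\phi$-element label by the label of $(\Box{\To})$ via Lemma~\ref{relabeling-lemma}. That lemma substitutes a sequence of length $n_\phi$ for an $n_\phi$-element label, so it only produces labels with at most $n_\phi$ elements. But $(\Box{\To})$ admits labels of any size: for instance $1:\Box P(x),1\R2,1\R3\To\{2,3\}:P(x)$ follows by $(\Box{\To})$ with label $\{2,3\}$, while $n_{P(x)}=1$. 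So this step still has to be supplied.

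Your lemma is true and suffices, but you must prove it. State it as follows: for all $\Gamma,\Delta,\u$, if $\vdash\Gamma\To\Delta,\u':\phi$ for every nonempty $\u'\subseteq\u$ with $\#\u'\le n_\phi$, then $\vdash\Gamma\To\Delta,\u:\phi$. It needs three admissible rules, each by routine induction on height:
(a) from $\Gamma\To\Delta,\u:\phi$ infer $\Gamma\To\Delta,\u':\phi$ for $\u'\subseteq\u$; the rules with a principal formula on the right, including $(\textsf{id})$, only impose conditions preserved by shrinking the label.
(b) from $\u':\phi,\Gamma\To\Delta$ infer $\u:\phi,\Gamma\To\Delta$ for $\u'\subseteq\u$; this is your Case (i) fact for all formulas, since the conditions of $(\textsf{id})$, $({\to}{\To})$ and $(\Box{\To})$ survive enlarging the label.
(c) deletion of $\s:\bot$ from the succedent, where it is never principal.

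Then induct on $\phi$:
\emph{Atoms, $\bot$, $\land$, $\forall$, $\Box$.} Atoms follow by $({\To}\textsf{at})$, and $\bot$ by (c) plus weakening. For $\land,\forall,\Box$, invert, apply the induction hypothesis (using a common fresh variable, resp.\ relabelling the fresh labels to a common one), and reapply the rule.
\emph{Implication.} For $\alpha\to\beta$ (so $n_\phi=n_\beta$), fix $\u_0\subseteq\u$. For each $\u'\subseteq\u_0$ with $\#\u'\le n_\beta$, inverting $({\To}{\to})$ at $\u'$ and applying (b) gives $\u_0:\alpha,\Gamma\To\Delta,\u':\beta$. The hypothesis for $\beta$ then gives $\u_0:\alpha,\Gamma\To\Delta,\u_0:\beta$, and $({\To}{\to})$ concludes.
\emph{Inquisitive disjunction.} For $\alpha\lori\beta$, take $\u_1,\u_2\subseteq\u$ with $\#\u_1\le n_\alpha$ and $\#\u_2\le n_\beta$. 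Invert at $\u_1\cup\u_2$ and use (a) to get $\Gamma\To\Delta,\u_1:\alpha,\u_2:\beta$. The hypothesis for $\beta$, keeping $\u_1:\alpha$ as context, yields $\Gamma\To\Delta,\u_1:\alpha,\u:\beta$. The hypothesis for $\alpha$ then yields $\Gamma\To\Delta,\u:\alpha,\u:\beta$, and $({\To}\lori)$ concludes.

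In the cut itself, proceed as follows. Relabel $\t$ onto each nonempty $\u'\subseteq\u$ with $\#\u'\le n_\phi$; this is legitimate since $\t$ is fresh and $\w\notin\t$. Contract any repeated relational atoms and weaken with the rest of $\w\R\u$. The lemma then gives $\w\R\u,\Gamma\To\Delta,\u:\phi$, and your Step 3 goes through. No height bound is needed there, because the final cut on $\u:\phi$ is licensed by the primary induction.
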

\begin{proof}
	The proof follows a standard structure for proofs of cut admissibility: an induction on the length of the \textit{cut formula} $\phi$ and a subinduction on the sum of the heights of the derivations of the premisses. We start from two assumptions, $\vdash_n\Gamma\To\Delta,\s:\phi$ and $\vdash_m\s:\phi,\Pi\To\Sigma$, and we aim to reach the conclusion $\vdash\Gamma,\Pi\To\Delta,\Sigma$. Let us call a formula \textit{principal} in a premiss when it is principal in the last applied rule in the derivation of that premiss. There are two cases that we need to consider that cannot be transferred or straightforwardly adapted from the cut admissibility proof given in \cite{LitakSano:25}.
	
	First, suppose that the cut formula is not principal in at least one premiss and that the last applied rule in that premiss is $(\Box{\To})$ or $({\To}\Box)$. We spell out a representative case, as the others follow similar arguments. Assume that the cut formula $\s:\phi$ is not principal in the left premiss and that the last applied rule in that premiss is $({\To}\Box)$. Then, the assumption for the left premiss has the form $\vdash_n\Gamma\To\Delta',\u:\Box\psi,\s:\phi$, and it is the case that for some $\t$ s.t. $\#\t=n_\psi$ and $\t\cap(W_{\Gamma,\Delta'}\cup \s\cup\u)=\emptyset$, for all $\w\in\u$, $\vdash_{n-1}\Gamma,\w\R\t\To\Delta',\t:\psi,\s:\phi$. We can then combine the $(n-1)$-derivability of these sequents with the hypothesis about the right premiss, $\vdash_m\s:\phi,\Pi\To\Sigma$, to apply the cut rule (available here by subinduction hypothesis on the sum of the heights $n+m$) and obtain, for all $\w\in\u$, $\vdash\Gamma,\w\R\t,\Pi\To\Delta',\t:\psi,\Sigma$. At this point, if needed, we can invoke Lemma \ref{relabeling-lemma} to substitute $\t$ with a label $\t'$ such that $\t'\cap(W_{\Gamma,\Pi,\Delta',\Sigma}\cup\u\cup\t)=\emptyset$, and we can then conclude by an application of $({\To}\Box)$:
	\begin{prooftree}
		\AxiomC{$\{\Gamma,\w\R\t',\Pi\To\Delta',\t':\psi,\Sigma\mid\w\in\u\}$}
		\RightLabel{(${\To}\Box$)}
		\UnaryInfC{$\Gamma,\Pi\To\Delta',\u:\Box\psi,\Sigma$}
	\end{prooftree}
	
	Now, suppose instead that the cut formula is $\s:\Box\phi$ and that the last step of both derivations is a modal rule with $\s:\Box\phi$ principal. Then, from the right premiss, we get $\Pi=\Pi',\w\R\t$ for some $\w\in\s$ and $\vdash_{m-1} \Pi',\s:\Box\phi,\w\R\t,\t:\phi\To\Sigma$. From the left premiss, we have that for some $\u$ s.t. $\#\u=n_\phi$ and $\u\cap(W_{\Gamma,\Delta}\cup\s)=\emptyset$, for all $\vv\in\s$,  $\vdash_{n-1}\Gamma,\vv\R\u\To\Delta,\u:\phi$. From the last item, we can isolate the specific case of $\vv=\w$ and, by Lemma \ref{relabeling-lemma}, we can perform the substitution of $\u$ with $\t$, obtaining $\vdash_{n-1}\Gamma,\w\R\t\To\Delta,\t:\phi$. We prove the conclusion $\vdash \Gamma,\Pi',\w\R\t\To\Delta,\Sigma$ by constructing the following derivation:
	\begin{prooftree}
		\AxiomC{$\Gamma,\w\R\t\To\Delta,\t:\phi$}
		\AxiomC{$\Gamma\To\Delta,\s:\Box\phi$}
		\AxiomC{$\Pi',\s:\Box\phi,\w\R\t,\t:\phi\To\Sigma$}
		\RightLabel{(\textsf{cut})}
		\BinaryInfC{$\Gamma,\Pi',\w\R\t,\t:\phi\To\Delta,\Sigma$}
		\RightLabel{(\textsf{cut})}
		\BinaryInfC{$\Gamma,\w\R\t,\Gamma,\Pi',\w\R\t\To\Delta,\Delta,\Sigma$}
		\RightLabel{(\textsf{c}${\To}$),(${\To}$\textsf{c})}
		\UnaryInfC{$\Gamma,\Pi',\w\R\t\To\Delta,\Sigma$}
	\end{prooftree}
The first application of the cut rule is enabled by the subinductive hypothesis abouth the heights $n$ and $m$, while the second application relies on the inductive hypothesis about the length of $\phi$. The third step implicitly includes multiple applications of the contraction rules.
\end{proof}
\subsection{Completeness}

We prove completeness by showing that from any non-derivable sequent we can build a canonical model that refutes it. The proof generalizes the approach of \cite{LitakSano:25} to the modal case. First, we define a notion of saturation and prove that non-derivable sequents can be extended to saturated ones. Then, we show how to obtain a canonical model from a saturated sequent, and prove that this model refutes the original sequent. This establishes that each valid sequent is derivable. Finally, we lift this to a strong completeness theorem for \inqwqml{} by using the compactness and coherence properties of the logic.

For the proof, we need to extend our range of syntactic objects to include infinite sequents. Therefore, we start by defining a generalized class of sequents which we will call g-sequents. A g-sequent is an ordered pair $\Gamma\To\Delta$ consisting of (possibly infinite) multisets $\Gamma$ and $\Delta$, where $\Gamma$ can contain labelled formulas and relational atoms, while $\Delta$ contains only labelled formulas. For a g-sequent $\Gamma\To\Delta$, we define derivability in \IWMC{}, denoted by $\vdash\Gamma\To\Delta$, as the existence of a finite sequent $\Gamma'\To\Delta'$ with $\Gamma'\subseteq\Gamma$, and $\Delta'\subseteq\Delta$ which is derivable in the sense of Definition \ref{def:derivability}.
Note that every sequent $\Gamma\To\Delta$ is also a $g$-sequent, and that the derivability of $\Gamma\To\Delta$ as a sequent, as given by Definition \ref{def:derivability}, coincides with its derivability as a $g$-sequent by the admissibility of weakening. 
\paragraph{Saturation.} We say that a g-sequent $\Gamma\To\Delta$ is \emph{saturated} if it satisfies the following conditions:
\begin{itemize}[align=parleft,itemsep=0cm,labelwidth=.9cm,leftmargin=4em]
\item[]\hspace*{-1.075cm}$(\textsf{unprov})$ 
$\not\vdash\Gamma\To\Delta$
\item[$(\textsf{at}L)$] If $\s:P(x_1,\dots,x_n)\in\Gamma$, then $\{\w\}:P(x_1,\dots,x_n)\in\Gamma$ for all $\w\in\s$.
\item[$(\textsf{at}R)$] If $\s:P(x_1,\dots,x_n)\in\Delta$, then $\{\w\}:P(x_1,\dots,x_n)\in\Delta$ for some $\w\in\s$.
\item[$(\lori R)$] If $\s:\phi\lori\psi\in\Delta$, then $\s:\phi\in\Delta$ and $\s:\psi\in\Delta$.
\item[$(\lori L)$] If $\s:\phi\lori\psi\in\Gamma$, then $\s:\phi\in\Gamma$ or $\s:\psi\in\Gamma$.
\item[$(\Box R)$] If $\s:\Box\phi\in\Delta$, then for some $\w\in\s$ there is a $\t\subseteq\W_{\Gamma,\Delta}$ s.t. $(1)$ $\#\t= n_\phi$ $(2)$ $\w\R\t\subseteq\Gamma$ $(3)$ $\t:\phi\in\Delta$ 
\item[$(\Box L)$] If $\s:\Box\phi\in\Gamma$, then for all $\w\in\s$ and all finite $\t$ s.t. $\w\R\t\subseteq\Gamma$, $\t:\phi\in\Gamma$
\item[$(\land R)$] If $\s:\phi\land\psi\in\Delta$, then $\s:\phi\in\Delta$ or $\s:\psi\in\Delta$
\item[$(\land L)$] If $\s:\phi\land\psi\in\Gamma$, then $\s:\phi\in\Gamma$ and $\s:\psi\in\Gamma$
\item[$(\!\to\!\! R)$] If $\s:\phi\to\psi\in\Delta$, then for some $\t\subseteq\s$ $\t:\phi\in\Gamma$ and $\t:\psi\in\Delta$
\item[$(\!\to\!\! L)$]  If $s:\phi\to\psi\in\Gamma$, then for all $\t\subseteq\s$, $\t:\phi\in\Delta$ or $\t:\psi\in\Gamma$
\item[$(\forall R)$] If $\s:\forall x\phi\in\Delta$, then for some $z\in Var$, $\s:\phi[z/x]\in\Delta$
\item[$(\forall L)$] If $\s:\forall x\phi\in\Gamma$, then for all $z\in Var$, $\s:\phi[z/x]\in\Gamma$
\end{itemize}
\begin{lemma}[Saturation Lemma]\label{Saturation-Lemma} If a (finite) sequent $\Gamma\To\Delta$ is not derivable, there is (within a suitably extended language) a saturated g-sequent $\Gamma^+\To\Delta^+$ such that $\Gamma\subseteq\Gamma^+$ and $\Delta\subseteq\Delta^+$. We say that $\Gamma^+\To\Delta^+$ is a \emph{saturated extension} of $\Gamma\To\Delta$. 
\end{lemma}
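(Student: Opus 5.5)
We build $\Gamma^+\To\Delta^+$ as the union of a chain of finite non-derivable sequents $\Gamma_0\To\Delta_0\subseteq\Gamma_1\To\Delta_1\subseteq\cdots$, with $\Gamma_0\To\Delta_0=\Gamma\To\Delta$. Before starting, we extend the language with countably many fresh variables, so that fresh eigenvariables are always available for $({\To}\forall)$. Indices are natural numbers, and only finitely many occur at any stage, so fresh labels for $({\To}\Box)$ are also always available.

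\textbf{Task enumeration.} We fix a fair enumeration of all potential \emph{tasks}. A task is a triple consisting of a labelled formula $\s:\phi$, a side ($L$ or $R$), and the extra parameters relevant to the left rules. These parameters are a variable $z$ for $(\forall{\To})$, a subset $\t\subseteq\s$ for $({\to}{\To})$, and a pair $(\w,\t)$ with $\w\in\s$ and $\t$ finite for $(\Box{\To})$. The enumeration is arranged so that every task appears infinitely often. This is possible because the set of all labelled formulas over $\N$ and the extended variable set is countable.

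\textbf{One step of the construction.} At stage $k$ we process the $k$-th task, if its labelled formula currently lies on the indicated side of $\Gamma_k\To\Delta_k$ (for $(\Box{\To})$ we also require $\w\R\t\subseteq\Gamma_k$). We apply the corresponding rule backwards: from the conclusion $\Gamma_k\To\Delta_k$ we pass to one of its premisses, keeping the principal formula in place.
\begin{itemize}
\item For one-premiss rules, the premiss is not derivable, since otherwise the conclusion would be derivable by the rule itself. If one insists on keeping the principal formula, we use the height-preserving invertibility of Proposition \ref{prop-structural-properties} together with admissible weakening.
\item For multi-premiss rules, at least one premiss must be non-derivable, and we choose such a premiss. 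This covers $({\To}\textsf{at})$, $({\To}\land)$, $(\lori{\To})$, $({\To}{\to})$ and $({\to}{\To})$. For example, $({\To}{\to})$ yields some $\t\subseteq\s$ with $\t:\phi,\Gamma\To\Delta,\t:\psi$ non-derivable, which is exactly condition $({\to}R)$.
\item For $(\textsf{at}L)$, the backward step adds $\{\w\}:P$ on the left. Derivability is preserved by a simple admissible step, namely cutting against the derivable sequent $\s:P\To\{\w\}:P$, which is an instance of $(\textsf{id})$.
\item For $({\To}\Box)$, we choose $\t$ fresh of size $n_\phi$. Some premiss $\w\R\t,\Gamma_k\To\Delta_k,\t:\phi$ is non-derivable, and this witnesses $(\Box R)$. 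Note that $\t\subseteq\W$ of the final sequent.
\end{itemize}

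\textbf{Verifying the limit.} We set $\Gamma^+=\bigcup\Gamma_k$ and $\Delta^+=\bigcup\Delta_k$. $(\textsf{unprov})$ holds because derivability of a g-sequent is witnessed by a finite subsequent. Any such subsequent lies inside some $\Gamma_k\To\Delta_k$, which is non-derivable, and weakening then gives the claim.

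The existential conditions ($(\lori L)$, $(\land R)$, $(\to R)$, $(\forall R)$, $(\Box R)$, $(\textsf{at}R)$) need only be witnessed once. Each formula appears on its side from some stage onward, and the corresponding task recurs after that stage, so a witness is eventually added. Principal formulas are never removed, and formulas only accumulate.

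The universal conditions ($(\forall L)$ for every variable, $(\to L)$ for every $\t\subseteq\s$, $(\Box L)$ for every $\w\in\s$ and every $\t$ with $\w\R\t\subseteq\Gamma^+$) are handled by fairness. The relevant task with its specific parameter recurs infinitely often. Since $\t$ is finite, $\w\R\t$ enters $\Gamma^+$ at some finite stage, so the task is eventually processed at a stage where its preconditions hold.

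\textbf{Main obstacle.} The delicate point is the interaction between $(\Box R)$, which keeps introducing new indices, and $(\Box L)$ and $(\to L)$, which quantify over all finite $\t$ and must be closed under the relational atoms created later. The plan handles this by enumerating tasks over \emph{all} finite subsets of $\N$ in advance, rather than only those present at the moment. A task whose preconditions fail is simply skipped, and infinite recurrence guarantees that it is revisited after its preconditions are met. We must also check that the fresh labels chosen for $({\To}\Box)$ are disjoint from the current universe, which is finite, together with $\s$. This is always possible, so the side condition $\ddagger$ is met.
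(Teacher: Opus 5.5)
Your proof is correct and follows essentially the paper's route. Both build a chain of finite non-derivable sequents driven by a fair enumeration in which each item recurs infinitely often, use fresh indices of size $n_\phi$ for $({\To}\Box)$, and cut against an $(\textsf{id})$ sequent for $(\textsf{at}L)$. Enumerating the pairs $(\w,\t)$ for $(\Box{\To})$ in advance is only a bookkeeping variant of the paper's choice to add, at each visit of $\s:\Box\psi\in\Gamma_i$, all the finitely many $\t:\psi$ with $\w\R\t\subseteq\Gamma_i$.

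One correction is needed. Keeping the principal formula in $\Gamma_{k+1}\To\Delta_{k+1}$ is not justified by invertibility plus weakening: both of these transfer derivability in the wrong direction. The correct argument runs as follows. If the premiss with the principal formula retained were derivable, applying the rule would give $\Gamma_k\To\Delta_k$ with the principal formula duplicated. The admissible contraction rules of Proposition~\ref{prop-structural-properties} then yield $\Gamma_k\To\Delta_k$ itself, which is a contradiction. This is what the paper does in its $({\To}\Box)$ case.

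The same contraction step is needed after your cut in the $(\textsf{at}L)$ case. Cutting against $\s:P\To\{\w\}:P$ produces an extra copy of $\s:P$, which must then be contracted away.
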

\begin{proof} Take a non-derivable finite sequent $\Gamma\To\Delta$. We expand the language with countably many fresh variables $y_0,y_1,\dots$. Fix an enumeration $(\s_i:\phi_i)_{i\in\N}$ of all labelled formulas in the extended language such that each labelled formula occurs infinitely often in the enumeration; note that this is possible since labels are finite subsets of $\N$, and so they are countably many and, as a consequence, labelled formulas are countably many as well. Then for each $i\in\N$ we define inductively a family of finite multisets $\Gamma=\Gamma_0\subseteq\Gamma_1\subseteq\dots$ and  $\Delta=\Delta_0\subseteq\Delta_1\subseteq\dots$ such that for each $i$, $\not\vdash\Gamma_i\To\Delta_i$. For the inductive step, we distinguish a number of cases depending on the labelled formula $(\s_i:\phi_i)$. We spell out the details for the case in which $\phi_i$ is an atomic formula $P(x_1,\dots,x_n)$ and $s_i:\phi_i\in\Gamma_i$, and for the case in which $\phi_i$ is a modal formula $\Box\psi$, referring to \cite{LitakSano:25} for the remaining cases.

If $\phi_i$ is an atomic formula $P(x_1,\dots,x_n)$ and $s_i:\phi_i\in\Gamma_i$, we let $\Gamma_{i+1}=\Gamma_i\cup\{\w:\phi_i\mid\w\in\s_i\}$ and $\Delta_{i+1}=\Delta$. To show that the sequent $\Gamma_{i+1}\To\Delta_{i+1}$ is not derivable, we use the following notation: let $\bar{x}$ stand for $x_1,\dots,x_n$, let $\Gamma_i=\Gamma'_i,\s_i:P(\bar{x})$, let $\s=\{w_1,\dots,w_m\}$ and let $ID(k)$ stand for the sequent $\Gamma'_i,\s_i:P(\bar{x})\To\w_k:P(\bar{x})$. 
Note that $ID(k)$ is an (\textsf{id})-initial sequent, so it is derivable for any $k=1,\dots,m$. 
Then, we assume by contraposition the derivability of $\Gamma_{i+1}\To\Delta_{i+1}$, which we write in the extended form $\Gamma'_i,s:P(\bar{x}),\w_i:P(\bar{x}),\dots,\w_m:P(\bar{x})\To\Delta$, and show the derivability of $\Gamma_i\To\Delta_i$. The construction relies on the cut and contraction rules, and is built by iterating the derivation shown below to eliminate each $\w_k:P(\bar{x})$ using $ID(k)$. After $m$ iterations, we are left with $\Gamma_i\To\Delta_i$. 
	\begin{prooftree}
		\AxiomC{$ID(2)$}
		\AxiomC{($ID(1)$)}
		\noLine
		\UnaryInfC{$\Gamma'_i,\ \s_i:P(\bar{x})\To\w_1:P(\bar{x})$}
		\AxiomC{($\Gamma_{i+1}\To\Delta_{i+1}$)}
		\noLine
		\UnaryInfC{$\Gamma'_i,\ \s_i:P(\bar{x}),\ \w_1:P(\bar{x}),\ \dots,\ \w_m:P(\bar{x})\To\Delta_{i}$}
		\RightLabel{(\textsf{cut})}
		\BinaryInfC{$\Gamma'_i,\ \s_i:P(\bar{x}),\ \Gamma'_i,\ \s_i:P(\bar{x}),\ \w_2:P(\bar{x}),\ \dots,\ \w_m:P(\bar{x})\To\Delta_i$}
		\RightLabel{(\textsf{c}$\To$)}
		\UnaryInfC{$\Gamma'_i,\ \s_i:P(\bar{x}),\ \w_2:P(\bar{x}),\ \dots,\ \w_m:P(\bar{x})\To\Delta_i$}
		\RightLabel{(\textsf{cut})}
		\BinaryInfC{\phantom{i}$\dots$}
\end{prooftree}

\noindent
Now, we consider the case of a labelled formula $s_i:\phi_i$ with $\phi_i=\Box\psi$. By induction hypothesis, $\not\vdash\Gamma_i\To\Delta_i$, so $s_i:\phi_i$ cannot be in both $\Gamma_i$ and $\Delta_i$. We thus have three cases to consider.
\begin{itemize}
	\item Case 1: $s_i:\phi_i$ is neither in $\Gamma_i$ nor in $\Delta_i$. We simply set $\Gamma_{i+1}=\Gamma_i$, $\Delta_{i+1}=\Delta_i$.
	\item Case 2: $\s_i:\phi_i\in\Gamma_i$. Consider the set $\{\t\mid\text{for some }\w\in\s_i, \w\R\t\subseteq\Gamma_i\}$. This set is finite, since $\Gamma_i$ is finite. Let $\t_1,\dots,\t_k$ be all the elements. Now we define a sequence $(\Gamma_i^j)_{j\le k}$ by letting $\Gamma_i^0=\Gamma_i$ and $\Gamma_i^{j+1}=\Gamma_i^j\cup\{\t_{j+1}:\psi\}$. 
	
	Inductively, each sequent $\Gamma_i^j\To\Delta_i$ is not derivable. To see this, suppose $\Gamma_i^j\To\Delta_i$ is not derivable but $\Gamma_i^{j+1}\To\Delta_i$ is. The latter means that we can derive $\Gamma_i^j\cup\{\t_{j+1}:\psi\}\To\Delta_i$. However, since $\Gamma_i^j$ contains $\s_i:\Box\psi$ and also $\w\R\t_{j+1}$ for some $\w\in\s_i$, by the rule $(\Box{\To})$ we have that $\Gamma_i^j\To\Delta_i$ is derivable, contrary to assumption.
	
	Finally, we set $\Gamma_{i+1}=\Gamma_i^{k}=\Gamma_i\cup\{\t_j:\psi\mid 1\le j\le k\}$ and $\Delta_{i+1}=\Delta_i$.
	
	\item {Case 3: $\s_i:\phi_i\in\Delta_i$}. Note that since $\Gamma_i$ and $\Delta_i$ are finite, the corresponding universe $\W_{\Gamma_i,\Delta_i}$ is finite. Therefore, there are infinitely many natural numbers that are not in this universe. Let $\vv_1,\dots,\vv_{n_\psi}$ be the first $n_\psi$ such labels and let $\t=\{\vv_1,\dots,\vv_{n_\psi}\}$. Now consider for each $\w\in \s_i$ the corresponding sequent:
	$$\Gamma_i,\w\R\t\To\Delta_i,\t:\psi$$
	If all these sequents were derivable, then by the rule $({\To}\Box)$, $\Gamma_i\To\Delta_i,\s_i:\Box\psi$ would be derivable; and since $\s_i:\Box\psi\in\Delta_i$, by contraction, 
	$\Gamma_i\To\Delta_i$ would be derivable, contrary to assumption.  
	So, for at least one $\w\in\s_i$, the corresponding sequent is not derivable. Let $\w^*$ be the least such number and define $\Gamma_{i+1}=\Gamma_i\cup\w^*\R\t$ and $\Delta_{i+1}=\Delta_i\cup\{\t:\psi\}$.
	
\end{itemize}
Finally we set $\Gamma^+=\bigcup_i \Gamma_i$ and $\Delta^+=\bigcup_i\Delta_i$. Clearly, $\Gamma^+\To\Delta^+$ is not derivable: if that was the case, there would be some finite subsets $\Gamma'\subseteq\Gamma^+$ and $\Delta'\subseteq\Delta^+$ such that $\vdash \Gamma'\To\Delta'$. Due to their finiteness, there must be some $i\in\N$ such that $\Gamma'\subseteq\Gamma_i$ and $\Delta'\subseteq\Delta_i$. By weakening, this would imply $\vdash\Gamma_i\To\Delta_i$, which we know not to be the case. 
The fact that $\Gamma^+\To\Delta^+$ is saturated is straightforwardly ensured by the inductive construction. As an illustration, we show that the $(\Box L)$ condition is satisfied. 

Suppose that $\s:\Box\psi\in\Gamma$ and suppose that for some $\w\in\s$ we have $\w\R\t\subseteq \Gamma$ for some finite set $\t$. We need to show $\t:\psi\in\Gamma$.	Let $i$ be a number such that $\Gamma_i$ contains $\s:\Box\psi$ as well as all the relational atoms in the set $\w\R\t$ (such a number exists since $\t$ is finite and so $\w\R\t$ is a finite set of atoms). Let $j\ge i$ be a number that enumerates the labelled formula $\s:\Box\psi$ (which exists since we assumed that all labelled formulas are enumerated infinitely many times). Then since $\Gamma_j$ contains $\s:\Box\psi$ as well as $\w\R\t$, our procedure makes sure that $\Gamma_{j+1}$ (and so also $\Gamma^+$) contains $\t:\psi$, as required.
\end{proof}

\noindent
We have thus shown that any non-derivable sequent can be extended to a saturated $g$-sequent. From any such $g$-sequent, we can then construct a canonical model that refutes it. The saturation conditions will guarantee that this model acts as a countermodel for the original sequent.
\paragraph{Canonical model construction} Let $\Gamma\To\Delta$ be a saturated g-sequent. We define a canonical model $M_{\Gamma,\Delta}^c=\langle\W_{\Gamma,\Delta},D^c,R^c,I^c\rangle$ as follows:

\begin{itemize}
\item $\W_{\Gamma,\Delta}$ is the universe of the g-sequent $\Gamma\To\Delta$, i.e. the set of all indices that occur in it;
\item $D^c$ is the set of variables occurring in $\Gamma\To\Delta$;
\item $\w R^c \vv\iff \w\R\vv\in\Gamma$;
\item $\langle x_1,\dots,x_n\rangle\in I^c_\w(P)\iff \{\w\}:P(x_1,\dots,x_n)\in\Gamma$.
\end{itemize}

\begin{lemma}[Support lemma]\label{Support-Lemma}
For all labelled formulas $\s:\phi$,
\begin{itemize}
	\item if $\s:\phi\in\Gamma$ then $M_{\Gamma,\Delta}^c,\s\models_{\text{id}}\phi$
	\item if $\s:\phi\in\Delta$ then $M_{\Gamma,\Delta}^c,\s\not\models_{\text{id}}\phi$
\end{itemize}
\end{lemma}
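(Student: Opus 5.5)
Both items are proved simultaneously by induction on the complexity of $\phi$. The induction hypothesis is taken to range over \emph{all} labels $\s$, and also over all substitution instances $\phi[z/x]$, since the quantifier clauses replace variables by variables. Here $\text{id}$ is the identity assignment on $D^c$. Implicitly, when we write $M^c,\s\models\phi$, the label $\s$ is read as a state of the canonical model, i.e.\ we use the identity mapping on indices. So we need $\s\subseteq\W_{\Gamma,\Delta}$, which holds because any labelled formula occurring in the sequent has its indices in the universe.

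\emph{Atomic case.} If $\s:P(\bar x)\in\Gamma$, then $(\textsf{at}L)$ gives $\{\w\}:P(\bar x)\in\Gamma$ for every $\w\in\s$. By the definition of $I^c$, this means $\s\models P(\bar x)$. If $\s:P(\bar x)\in\Delta$, then $(\textsf{at}R)$ gives some $\w\in\s$ with $\{\w\}:P(\bar x)\in\Delta$. This $\w$ cannot satisfy $\{\w\}:P(\bar x)\in\Gamma$: otherwise $(\textsf{id})$ would make the sequent derivable, contradicting $(\textsf{unprov})$. Hence $\w$ falsifies $P(\bar x)$. The $\bot$ case uses $(\bot{\To})$ and $(\textsf{unprov})$ to exclude $\s:\bot\in\Gamma$; for $\s:\bot\in\Delta$ we note that $\s$ is nonempty and so does not support $\bot$.

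\emph{Propositional and quantifier cases.} The cases for $\land$ and $\lori$ follow directly from the saturation clauses and the induction hypothesis. For $\forall$ the domain consists of variables, so $\s\models_{\text{id}}\forall x\phi$ iff $\s\models_{\text{id}[x\mapsto z]}\phi$ for all $z$, which coincides with $\s\models_{\text{id}}\phi[z/x]$. Then $(\forall L)$ and $(\forall R)$ together with the induction hypothesis applied to $\phi[z/x]$ settle the case. For $\to$ in $\Delta$, $(\to R)$ yields $\t\subseteq\s$ with $\t:\phi\in\Gamma$ and $\t:\psi\in\Delta$, which gives a refuting substate. For $\to$ in $\Gamma$, every substate of $\s$ is a label $\t\subseteq\s$, and the empty substate is trivial. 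By $(\to L)$, either $\t:\phi\in\Delta$, so $\t\not\models\phi$, or $\t:\psi\in\Gamma$, so $\t\models\psi$. In both cases the implication holds at $\t$.

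\emph{Modal cases.} These are the crux. For $\s:\Box\phi\in\Delta$, $(\Box R)$ provides $\w\in\s$ and $\t$ with $\w\R\t\subseteq\Gamma$ and $\t:\phi\in\Delta$. Then $\t\subseteq R^c[\w]$, and by the induction hypothesis $\t\not\models\phi$. By persistency $R^c[\w]\not\models\phi$, so $\s\not\models\Box\phi$. For $\s:\Box\phi\in\Gamma$, fix $\w\in\s$; we must show $R^c[\w]\models\phi$. This is the main obstacle, because $R^c[\w]$ may be infinite and so is not a label. We resolve it with Proposition~\ref{Finite-coherence}: it suffices that every finite nonempty $\t\subseteq R^c[\w]$ with $\#\t\le n_\phi$ supports $\phi$. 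Such a $\t$ is a label satisfying $\w\R\t\subseteq\Gamma$, so $(\Box L)$ gives $\t:\phi\in\Gamma$, and the induction hypothesis gives $\t\models\phi$. Hence $\s\models\Box\phi$.
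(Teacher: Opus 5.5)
Your proof is correct and follows the paper's argument. It is an induction on $\phi$: the atomic cases use $(\textsf{at}L)$, $(\textsf{at}R)$ and $(\textsf{unprov})$; the $\Box$-in-$\Delta$ case uses $(\Box R)$, the induction hypothesis and persistency; and the $\Box$-in-$\Gamma$ case uses finite coherence, $(\Box L)$ and the induction hypothesis. You also spell out the $\bot$, propositional and quantifier cases that the paper omits, and you restrict to substates of size at most $n_\phi$ where the paper uses all finite substates, but these are refinements of the same route rather than a different one.
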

\noindent\emph{Proof.}
By induction on $\phi$. We consider the cases of $\phi=P(\bar{x})$ and $\phi=\Box\psi$ (we omit the subscript for the assignment, for readability).
\begin{itemize}
	\item Suppose $\s:P(\bar{x})\in\Gamma$. By saturation (\textsf{at}$L$), for all $\w\in\s$ we have that $\w:P(\bar{x})\in\Gamma$. By definition of $M^c_{\Gamma,\Delta}$, for all $\w\in\s$: $\bar{x}\in I^c_w(P)$. Therefore, $M^c,\s\models P(\bar{x})$.
	\item Suppose $\s:P(\bar{x})\in\Delta$. By saturation (\textsf{at}$R$), for some $\w\in\s$, we have that $\w:P(\bar{x})\in\Delta$. Clearly, then, $\w:P(\bar{x})\notin\Gamma$, or we would have that $\Gamma\To\Delta$ is an instance of (\textsf{id}), contradicting (\textsf{unprov}). This implies that $\bar{x}\notin I^c_\w(P)$, meaning that $M^c,\s\nmodels P(\bar{x})$.
\item Suppose $\s:\Box\psi\in\Gamma$. We need to show that for any $\w\in\s$, $M^c_{\Gamma,\Delta},R^c[\w]\models\psi$. 
Since $\psi$ is finitely coherent, it suffices to show that for any $\w\in\s$ and any finite $\t\subseteq R^c[\w]$, $M^c_{\Gamma,\Delta},\t\models\psi$.

So, take a $\w\in\s$ and a finite $\t\subseteq R^c[\w]$. By definition of $R^c$, the latter means that for each $\vv\in\t$, $\w\R\vv\in\Gamma$. So, we have $\w\R\t\subseteq\Gamma$. By saturation, $\s:\Box\psi\in\Gamma$ and $\w\R\t\subseteq\Gamma$ together imply $\t:\psi\in\Gamma$. By induction hypothesis, $M^c_{\Gamma,\Delta},\t\models\psi$, as required.

\item Suppose $\s:\Box\psi\in\Delta$. By saturation, for some $\w\in\s$ and some finite set $\t$ we have $\w\R\t\subseteq\Gamma$ and $\t:\psi\in\Delta$. By induction hypothesis, this gives $M^c_{\Gamma,\Delta},\t\not\models\psi$. Since $\w\R\t\subseteq\Gamma$ we have $\t\subseteq R^c[\w]$. By persistency, $M^c_{\Gamma,\Delta},R^c[\w]\not\models\psi$. Since $\w\in\s$, we conclude $M^c_{\Gamma,\Delta},\s\not\models\Box\psi$.\hfill$\Box$
\end{itemize}
\noindent
We can now prove that \IWMC{} is complete in the sense that it derives any valid (finite) sequent.

\begin{proposition}[Completeness for sequents]\label{Weak-completeness}
For any sequent $\Gamma\To\Delta$ we have: $$\models\Gamma\To\Delta\quad\text{ implies }\quad
\vdash\Gamma\To\Delta$$
\end{proposition}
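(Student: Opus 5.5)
We argue by contraposition. Suppose $\not\vdash\Gamma\To\Delta$. By the Saturation Lemma (Lemma \ref{Saturation-Lemma}), there is a saturated g-sequent $\Gamma^+\To\Delta^+$ with $\Gamma\subseteq\Gamma^+$ and $\Delta\subseteq\Delta^+$, possibly in a language extended with fresh variables. We form the canonical model $M^c=M^c_{\Gamma^+,\Delta^+}$, take the identity assignment $\textsf{id}$ on variables, and take a mapping $f:\N\to\W_{\Gamma^+,\Delta^+}$ that is the identity on the universe $\W_{\Gamma^+,\Delta^+}$ and sends every other index to some fixed element of it. The universe is nonempty because labels are nonempty, assuming $\Gamma\To\Delta$ contains at least one labelled formula or relational atom; the degenerate empty sequent is trivially not valid. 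With this choice we have $f[\s]=\s$ for every label $\s$ occurring in the g-sequent.

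Next we check that $M^c,f\Vdash_{\textsf{id}}\gamma$ for every $\gamma\in\Gamma$, and $M^c,f\nVdash_{\textsf{id}}\delta$ for every $\delta\in\Delta$. For labelled formulas this is exactly the Support Lemma (Lemma \ref{Support-Lemma}) applied to $\Gamma^+\To\Delta^+$, since $f[\s]=\s$. For a relational atom $\w\R\vv\in\Gamma\subseteq\Gamma^+$, the definition of $R^c$ gives $\w R^c\vv$, that is, $f(\w)R^cf(\vv)$. Hence $M^c,f\nVdash_{\textsf{id}}\Gamma\To\Delta$, so the sequent is not valid.

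Two routine points remain. First, $D^c$ must be nonempty and must contain the variables used. If no variable occurs, we add one to the domain; this is harmless because the enumeration in the saturation already uses the fresh variables $y_i$. Second, the model is for the extended signature of variables only; predicates are unchanged, so it is a legitimate model in which validity of the original sequent would have to hold. I expect no real obstacle here, since the Support Lemma carries the weight. The only delicate step is making sure the mapping $f$ is total and agrees with the identity on all indices occurring in labels, so that $f[\s]=\s$.
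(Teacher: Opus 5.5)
Your proposal is correct and follows the paper's proof: contraposition, the Saturation Lemma, the canonical model of the saturated extension, and the Support Lemma. The details you add are exactly what the paper's one-line appeal to the Support Lemma leaves implicit: the total mapping $f$ that is the identity on $\W_{\Gamma^+,\Delta^+}$, the check of relational atoms via the definition of $R^c$, and nonemptiness of the worlds and the domain. For the domain, the cleanest fix is to take $D^c$ to be all variables of the extended language; this is harmless because $(\forall L)$ ranges over all of them, and it also makes the identity assignment total.
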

\begin{proof}
By contraposition. Assume that $\nvdash\Gamma\To\Delta$. Then, by Lemma \ref{Saturation-Lemma}, $\Gamma\To\Delta$ has a saturated extension $\Gamma^+\To\Delta^+$ such that $\nvdash\Gamma^+\To\Delta^+$. By Lemma \ref{Support-Lemma}, there is $M$, $f$, and $g$ such that $M,f\nVdash_g \Gamma^+\To\Delta^+$ which, in particular, implies $M,f\nVdash_g\Gamma\To\Delta$. 
\end{proof}

%

\noindent
We have thus established that \IWMC{} is complete with respect to sequents: it derives all and only the valid ones. Via the connection provided by Proposition \ref{prop:connection}, it follows that \IWMC{} is also complete (and indeed, by compactness, strongly complete) with respect to entailment in the logic \inqwqml. 




\begin{theorem}[Strong completeness for \inqwqml]
For any set of \inqwqml{} formulas $\Phi\cup\{\psi\}$, letting $\s_\psi=\{1,\dots,n_\psi\}$ where $n_\psi$ is the coherence estimate for $\psi$ as given by Prop.\ \ref{Finite-coherence}, we have:
$$\text{$\Phi\models\psi\quad$ iff $\quad\vdash\{\s_\psi:\phi\mid\phi\in\Phi\}\To \s_\psi:\psi$}$$
\end{theorem}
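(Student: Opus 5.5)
The plan is to combine three earlier results: entailment-compactness (Proposition \ref{compactness}), the link between entailment and sequent validity (Proposition \ref{prop:connection}), and soundness and completeness for finite sequents (Propositions \ref{soundness} and \ref{Weak-completeness}). Note first that the right-hand side of the biconditional concerns a possibly infinite $g$-sequent whenever $\Phi$ is infinite. By definition, $\vdash\{\s_\psi:\phi\mid\phi\in\Phi\}\To\s_\psi:\psi$ means that some finite subsequent is derivable in the sense of Definition \ref{def:derivability}.

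For the left-to-right direction, assume $\Phi\models\psi$. By Proposition \ref{compactness} there is a finite $\Phi_0\subseteq\Phi$ with $\Phi_0\models\psi$. Since $\#\s_\psi=n_\psi$, Proposition \ref{prop:connection} applies with the label $\s_\psi$. It shows that the finite sequent $\s_\psi:\Phi_0\To\s_\psi:\psi$ is valid. Proposition \ref{Weak-completeness} then makes this sequent derivable. It is a finite subsequent of the $g$-sequent in the statement, so the $g$-sequent is derivable.

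For the right-to-left direction, assume the $g$-sequent is derivable. Then some finite $\Gamma'\To\Delta'$ is derivable, with $\Gamma'\subseteq\s_\psi:\Phi$ and $\Delta'\subseteq\{\s_\psi:\psi\}$.

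The one case needing care is $\Delta'=\emptyset$. Here I apply height-preserving weakening (Proposition \ref{prop-structural-properties}) to add $\s_\psi:\psi$ on the right. In either case I obtain $\vdash\s_\psi:\Phi_0\To\s_\psi:\psi$ for a finite $\Phi_0\subseteq\Phi$, namely the set of formulas labelled in $\Gamma'$. Soundness (Proposition \ref{soundness}) gives validity of this sequent, and Proposition \ref{prop:connection} gives $\Phi_0\models\psi$. Since entailment is monotone in the premisses, $\Phi\models\psi$ follows.

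A minor point is the degenerate case $n_\psi=0$, where $\s_\psi$ would be empty. By the inductive clauses of Proposition \ref{Finite-coherence}, every $n_\psi$ is at least $1$, so $\s_\psi$ is a legitimate non-empty label. No step is a real obstacle. The only place that needs care is matching the $g$-sequent notion of derivability to finite sequents, which is why weakening is used in the case $\Delta'=\emptyset$.
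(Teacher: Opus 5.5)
Your proposal is correct and follows essentially the same route as the paper. Left to right, you use compactness (Proposition~\ref{compactness}), then Proposition~\ref{prop:connection}, then completeness for sequents. Right to left, you use soundness, Proposition~\ref{prop:connection}, and weakening to match the $g$-sequent notion of derivability. The only difference is cosmetic: the paper argues both directions by contradiction or contraposition, while you argue directly.
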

\begin{proof} 
	$(\implies):$ If $\Phi\models\psi$, by compactness (Proposition \ref{compactness}), there is some finite $\Phi_0\subseteq\Phi$ such that $\Phi_0\models\psi$. By the definition of $\vdash$, it suffices to prove that $\vdash\{\s_\psi:\phi\mid\phi\in\Phi_0\}\To\s_\psi:\psi$. Assume towards a contradiction that $\nvdash\{\s_\psi:\phi\mid\phi\in\Phi_0\}\To\s_\psi:\psi$. By the completeness result for sequents, this implies that $\nmodels\{\s_\psi:\phi\mid\phi\in\Phi_0\}\To\s_\psi:\psi$. Then, since $\Phi_0$ is finite and $\# \s_\psi=n_\psi$, by Proposition \ref{prop:connection} we have that $\Phi_0\nmodels\psi$, which contradicts the initial assumptions.
	
	$(\impliedby):$ By contraposition, assume that $\Phi\nmodels\psi$. Then, for all finite $\Phi_0\subseteq\Phi$, $\Phi_0\nmodels\psi$. By Proposition \ref{prop:connection}, since $\#\s_\psi=n_\psi$, this implies that for any such $\Phi_0$, $\nmodels \{\s_\psi:\phi\mid\phi\in\Phi_0\}\To\s_\psi:\psi$. 
	By the soundness result for sequents (Proposition \ref{soundness}), we have that for all finite $\Phi_0\subseteq\Phi$, $\nvdash \{\s_\psi:\phi\mid\phi\in\Phi_0\}\To\s_\psi:\psi$ which, by the definition of $\vdash$ combined with the admissibility of (${\To}\textsf{w}$), gives us $\nvdash \{\s_\psi:\phi\mid\phi\in\Phi\}\To\s_\psi:\psi$.

\end{proof}
\subsection{Adding identity.}\label{Section:identity}
In this section, we show how to generalize our approach to the case of signatures including the identity predicate. We start with a brief introduction to the implementation of identity in \inqwqml. For a more complete presentation of identity in first-order inquisitive logic, we refer to \cite[Section 5.4]{Ciardelli:23book}.
\paragraph{Identity in \inqwqml}
In first-order inquisitive logics, identity is seen as a binary predicate whose interpretation can change across possible worlds. Syntactically, we extend the definition of the language of \inqwqml{} to include, amongst the predicate atoms, identity atoms of the form $x=y$, where $x,y\in Var$.

Semantically, the identity predicate $=$ is interpreted as a standard binary predicate. Therefore, in a model $M=\langle W,D,R,I\rangle$, for any world $w\in W$, $I_w(=)\subseteq D^2$. However, the interpretation of the identity predicate must satisfy two additional constraints. For a predicate $P$ and world $w$, let $P_w$ and $=_w$ denote, respectively, $I_w(P)$ and $I_w(=)$. Then, for any $w\in W$: \begin{itemize}
	\item Congruence: for any $d_1,d'_1,\dots,d_n,d'_n\in D$ and $n$-ary predicate $P$, if $d_1=_w d'_1,\dots, d_n=_w d'_n$, then: $$\text{$\langle d_1,\dots,d_n\rangle\in P_w\iff \langle d'_1,\dots,d'_n\rangle\in P_w$}$$
	\item Equivalence: $=_w$ is an equivalence relation on $D$
\end{itemize}
\noindent
The recursive definition of the support relation remains the same as in the case without identity, and the clause for identity atoms can be obtained from the one for predicates:\begin{itemize}
	\item $M,s\models_g x=y\iff \text{for all $w\in s$, }\langle g(x),g(y)\rangle\in I_w(=)$
\end{itemize}
\noindent
\paragraph{Extending the proof system.}
Since we see identity as a binary predicate, we extend the applicability of $({\To}\textsf{at})$ and $(\textsf{id})$ to labelled identity atoms $\s:x=y$. However, we also need to extend \IWMC{} with rules encoding the specific properties of identity. We add two rules, inspired by \cite[Section 6.5]{Negri:01}, and call the extended calculus $\IWMC^=$. In the rules, $x,y\in Var$, $\bar z$ is a sequence of variables, and $\phi[[y/x]]$ denotes the substitution of an arbitrary number of instances of $x$ in $\phi$ with instances of $y$:\\[.3cm]
\begin{minipage}{0.49\textwidth}
\centering
$\infer[(\textsf{Ref})]{\Gamma\To\Delta}{\s:x=x,\ \Gamma\To\Delta}$
\end{minipage}
\begin{minipage}{.5\textwidth}
\centering
$\infer[(\textsf{Repl})]{\s:x= y,\ \s:P(\bar z),\ \Gamma\To\Delta}{\s:x= y,\ \s:P(\bar z),\ \s:P(\bar z)[[y/x]],\ \Gamma\To\Delta}$
\end{minipage}\\

\noindent
The extended system $\IWMC^=$ satisfies analyticity and the structural properties that we proved for \IWMC{} in Section \ref{subsection-structural}. Additionally, the following rules encoding the properties of $=$ as an equivalence relation are admissible:\\[.3cm]
{\begin{minipage}{0.4999\textwidth}
		\centering
		$\infer[(\textsf{Sym})]{\s:x= y,\ \Gamma\To\Delta}{\s:x= y,\ \s:y=x,\ \Gamma\To\Delta}$
\end{minipage}}
{\begin{minipage}{0.49\textwidth}
		\centering
		$\infer[(\textsf{Trans})]{\s:x= y,\ \s:y=z,\ \Gamma\To\Delta}{\s:x= y,\ \s:y=z,\ \s:x=z,\ \Gamma\To\Delta}$
\end{minipage}}\\

\noindent
 For any formula $\phi$, the replacement axiom $\s:x= y,\ \s:\phi\To\s:\phi[[y/x]]$ can be shown to be derivable by adapting the argument provided in \cite[Lemma 6.5.2]{Negri:01}. Our proof uses the admissibility of the following sub-label replacement rule:\\[.3cm]
	\begin{minipage}{.999\textwidth}
	\centering
	$\infer[(\textsf{Sub-Repl})\;\text{\small where }\s'\subseteq \s]{\s:x= y,\ \s':P(\bar z),\ \Gamma\To\Delta}{\s:x= y,\ \s':\phi,\ \s':P(\bar z)[[y/x]],\ \Gamma\To\Delta}$
\end{minipage}\\

\noindent
We give an informal sketch of the proof, which is mostly routine. We prove, by induction on the length of $\phi$, the stronger claim that $\s:x= y,\ \s':\phi\To\s':\phi[[y/x]]$ is derivable for any $s'\subseteq s$. The base case for $\phi$ atomic is immediate by the admissibility of (\textsf{Sub-Repl}). The stronger claim is needed to prove the inductive case of implication, where subsets of $s$ are generated by the rules for $\to$. Here, the structure of the derivation is the same as in the original proof, but the inductive hypotheses involve subsets of the initial label. The remaining cases, including the inductive case of $\Box$, follow from standard arguments.

Combining the structural properties of $\IWMC^=$ with the derivability of the replacement axiom and the admissibility of the above rules makes it straightforward to prove the admissibility of ($\textsf{G-Repl}$), which generalizes (\textsf{Repl}) to arbitrary formulas, and of (${\To}$\textsf{G-Repl}), a generalized right-replacement rule:\\[.3cm]
{\begin{minipage}{0.4999\textwidth}
		\centering
		$\infer[(\textsf{G-Repl})]{\s:x= y,\ \s:\phi,\ \Gamma\To\Delta}{\s:x= y,\ \s:\phi,\ \s:\phi[[y/x]],\ \Gamma\To\Delta}$
\end{minipage}}
{\begin{minipage}{0.49\textwidth}
	\centering
	$\infer[({\To}\textsf{G-Repl})]{\s:x= y,\ \Gamma\To\Delta,\ \s:\phi}{\s:x= y,\ \Gamma\To\Delta,\ \s:\phi,\ \s:\phi[[y/x]]}$
	\end{minipage}}\\[.3cm]

\noindent
Using analogous arguments, one can prove the admissibility of their sub-label versions (\textsf{Sub-G-Repl}) and (${\To}$\textsf{Sub-G-Repl}), where $\s:\phi$ and $\s:\phi[[y/x]]$ are replaced, respectively, by $\s':\phi$ and $\s':\phi[[y/x]]$, with the requirement that $\s'\subseteq\s$ (notice that the stronger claim used in the derivability proof for the replacement axiom above is exactly what is needed in this more general case).
\paragraph{Completeness proof.}
To adapt the definition of saturated sequent, we extend (\textsf{at}$L$) and (\textsf{at}$R$) to labelled identity atoms of the form $\s:x=y$ and we add the following clauses: 
\begin{itemize}
		\item[]$(= L)$ If $\s:x= y\in\Gamma$, then: (i) if $\s:\phi\in\Gamma$, then for any possible substitution $\phi[[y/x]]$, $\s:\phi[[y/x]]\in\Gamma$ and (ii) if $\s:\phi\in\Delta$, then for any possible substitution $\phi[[y/x]]$, $\s:\phi[[y/x]]\in\Delta$
		\item[]$(= Ref)$ For all $x\in Var$ that occur in $\Gamma\cup\Delta$, for all singleton labels $\w\in W_{\Gamma,\Delta}$, $\w:x= x\in\Gamma$
\end{itemize}\hspace*{0cm}
	
	\noindent
	The proof of the saturation lemma requires two changes. We fix an enumeration $(z_i)_{i\in\mathbb{N}}$ of all variables in the extended language. First, for each $i\in\N$, we add to $\Gamma_{i+1}$ the finite set of labelled formulas $\textsf{Id}(i)\coloneqq\{k:z_j= z_j\mid 0\leq k,j\leq i,\ k\in W_{\Gamma_i,\Delta_i} \text{, and }z_j\text{ occurs in }\Gamma_i\cup\Delta_i\}$. Underivability holds by (\textsf{Ref}). 
	We also extend the case of an atom $\s_i:P(\bar{x})$ being in $\Delta_i$ to include identity atoms of the form $\s_i:x=y$.
	
	Then, we include one additional inductive case for identity atoms. If $\phi_i$ is $x= y$ and $\s_i:x= y\in\Gamma_i$, we let $\Gamma_i^{\textsf{at}}=\Gamma_i\cup\{\w:x=y\mid\w\in\s_i\}$ and we let $Sub_i(\phi)$ be the set of all possible substitutions $\s':\phi[[y/x]]$ for all $\s'\subseteq\s_i$. We define $\Gamma_{i+1}=\Gamma_i^{\textsf{at}}\cup(\bigcup_{\s_i:\phi\in\Gamma_i^{\textsf{at}}}Sub_i(\phi))\ \cup\ \textsf{Id}(i)$. 
	and  $\Delta_{i+1}=\Delta_i\cup(\bigcup_{\s_i:\phi\in\Delta_i}Sub_i(\phi))$.  
	To prove that $\Gamma_i^{\textsf{at}}\To\Delta_i$ is underivable, we proceed as in the proof of Lemma \ref{Saturation-Lemma} for the atomic case. The underivability of $\Gamma_{i+1}\To\Delta_{i+1}$ is guaranteed by the rule $(\textsf{Ref})$ in the case of $\textsf{Id}(i)$ and by the rules $(\textsf{Sub-G-Repl})$ and $({\To}\textsf{Sub-G-Repl})$ for the remaining additions to $\Gamma_{i+1}$ and $\Delta_{i+1}$.  
	With this construction, verifying the saturation conditions is straightforward.\\[.2cm]	
	The definition of the canonical model remains unchanged from the case without identity, since we view	identity as a binary predicate. In particular, the derived definition for the extension of identity is $\langle x,y\rangle\in I^c_\w(=)\iff \w:x=y\in\Gamma$. The verification of the congruence property of $=$ is immediate thanks to the saturation conditions. The fact that $=$ is an equivalence relation also follows easily: for symmetry, if $\w:x=y\in\Gamma$, then $\w:x=x\in\Gamma$ by ($=Ref$) and, therefore, $\w:y=x\in\Gamma$ by ($=L$); for transitivity, if $\w:x=y,\ \w:y=z\in\Gamma$, then we have that $\w:y=x\in\Gamma$, which can be used in combination with $\w:y=z\in\Gamma$ to get that $\w:x=z\in\Gamma$ by ($=L$). 
		The remaining steps of the completeness proof are analogous to the identity-free case.

\section{Future work}\label{Section:conclusions}
	We conclude by outlining some directions for potential future extensions of this work. 
	
	A natural question is whether the approach can be extended to arbitrary signatures containing function symbols as well as the identity predicate. In inquisitive first-order logics, the interpretation of function symbols can vary between possible worlds. For this reason, there is a distinction between \textit{rigid} function symbols and terms, whose interpretation is constant, and \textit{non-rigid} ones. We denote the first in bold (e.g., \textbf{f, t}). These two classes of syntactic objects are known to satisfy different properties semantically (see \cite[Chapter 5]{Ciardelli:23book} for a detailed discussion). For signatures containing function symbols, it seems natural to assume that $\IWMC^=$ could be adapted by using arbitrary terms instead of variables in the rules $(\textsf{id})$ and $({\To}\textsf{at})$, and by replacing $(\forall{\To})$ with the following two rules, where $\alpha$ stands for a classical formula, $t$ for an arbitrary fresh term and $\textbf{t}$ for a fresh rigid term:\\[.3cm]
	\begin{minipage}{.49\textwidth}
\centering
$\infer[\rulename{\forall^\textsf{gen}{\To}}]
{\s:\forall x.\phi,\,\Gamma \To \Delta}
{\s:\phi[\mathbf{t}/x],\,\s:\forall x.\phi,\,\Gamma \To \Delta}$
\end{minipage}			
\begin{minipage}{.50\textwidth}
\centering
$\infer[\rulename{\forall^\textsf{cl}{\To}}]
{\s:\forall x.\alpha,\,\Gamma \To \Delta}
{\s:\alpha[t/x],\,\s:\forall x.\alpha,\,\Gamma \To \Delta}$
\end{minipage}\\

In Section \ref{subsection-structural}, we showed that \IWMC{} enjoys key structural properties. It remains an open question whether the properties of our calculus can be used to obtain metatheoretical results about the logic \inqwqml, such as interpolation. The structural properties we established, and in particular invertibility, may also allow us to define a systematic proof search procedure for our calculus; completeness could then be established by showing that a failed proof search always produces a countermodel, following a well-established strategy for labelled sequent calculi for modal logics (see, e.g., \cite{GargGenoveseNegri:12,Negri:14}).

Other inquisitive modalities are natural candidates for extending the present approach. A particularly interesting case is that of neighborhood-based inquisitive modalities (see, for instance, the ones proposed in \cite{CiardelliInqNL:25} and \cite{CiardelliRoelofsen:15idel}). In neighborhood frames, the accessibility relation associates to each possible world a set of sets of possible worlds. It would be interesting to explore whether our labelled sequent calculus can be adapted to this semantic setting with similarly positive results.
\bibliographystyle{eptcs}
\bibliography{aiml26-3}

\end{document}